\newif\ifFull
 \newtheorem{theorem}{Theorem}
 \theoremstyle{plain}
 \newtheorem{lemma}{Lemma}
 \numberwithin{equation}{section}
 \newcommand{\edit}[1]{{{#1}}}
  \newcommand{\editmm}[1]{{{#1}}}
\begin{document}
\title{Parallel Peeling Algorithms}
\author{Jiayang Jiang \thanks{\noindent Harvard University, School of Engineering and Applied Sciences. Supported by NSF grants CCF-0915922 and IIS-0964473.} \and Michael Mitzenmacher\thanks{\noindent Harvard University, School of Engineering and Applied Sciences. Supported in part by NSF grants CCF-0915922, IIS-0964473, and CNS-1228598.} \and Justin Thaler \thanks{\noindent Yahoo! Labs. The majority of this work was performed while the author was a graduate student at Harvard University, School of Engineering and Applied Sciences. Parts of this work were performed while the author was a Research Fellow at the Simons Institute for the Theory of Computing at UC Berkeley. Supported by an NSF Graduate Research Fellowship, NSF grant CCF-0915922, and a Simons Research Fellowship.}}
\date{}
\maketitle

\begin{abstract}

The analysis of several algorithms and data structures can be framed
as a {\em peeling process} on a random hypergraph: vertices with
degree less than $k$ are removed until there are no vertices of degree
less than $k$ left. The remaining hypergraph is known as the $k$-core.  In
this paper, we analyze parallel peeling processes, where in each
round, all vertices of degree less than $k$ are removed.  It is known
that, below a specific edge density threshold, the $k$-core is empty
with high probability.
We show that, with high probability, below
this threshold, only $\frac{1}{\log((k-1)(r-1))}\log \log n + O(1)$ rounds of peeling are needed to
obtain the empty $k$-core for $r$-uniform hypergraphs; this bound is tight up to an additive constant. 
Interestingly, we show that above this
threshold, $\Omega(\log n)$ rounds of peeling are
required to find the non-empty $k$-core.  Since most algorithms and
data structures aim to peel to an empty $k$-core, this asymmetry appears
fortunate.  We verify the theoretical results both with 
simulation and with a parallel implementation using graphics processing
units (GPUs). Our implementation provides insights into how to structure parallel 
peeling algorithms for efficiency in practice.  
\end{abstract}
\section{Introduction}
Consider the following {\em peeling process}: starting with a random
hypergraph, vertices with degree less than $k$ are repeatedly removed,
together with their incident edges. (We use edges instead of hyperedges throughout the paper,
as the context is clear.)  This yields what is called the
$k$-core of the hypergraph, which is the maximal subgraph where each
vertex has degree at least $k$.  It is known that the $k$-core is uniquely defined and
does not depend on the order vertices are removed.  
The greedy peeling process produces sequential algorithms with
very fast running times, generally linear in the size of the graph.
Because of its simplicity and efficiency, peeling-based approaches appear especially useful for problems involving large data sets.
Indeed, this process, and variations on it, have found
applications in low-density parity-check codes \cite{LMSS,MV},
hash-based sketches \cite{Bloomier,GM}, satisfiability of random boolean
formulae \cite{BFU,Molloy}, and cuckoo hashing \cite{PR}.  Frequently,
the question in these settings is whether or not the $k$-core is
empty.  As we discuss further below, it is known that below a specific
edge density threshold $c^*_{k, r}$, the $k$-core is empty with high probability.
This asymptotic result in fact accurately predicts practical
performance quite well.

  In
this paper, we focus on expanding the applicability of peeling processes by
examining the use of parallelism in conjunction with peeling.  Peeling
seems particularly amenable to parallel processing via the following simple round-based algorithm:
in each round, all vertices of degree less than $k$ and their adjacent edges are removed in
parallel from the graph.  The
major question we study is: how many rounds are necessary before peeling is complete?

We show that, with high probability, when the edge density is a constant strictly below
the threshold $c_{k, r}^*$, only $\frac{1}{\log ((k-1)(r-1))}\log \log n + O(1)$ rounds of
peeling are needed for $r$-uniform hypergraphs.
(The hidden constant in the $O(1)$ term depends on the size of the ``gap'' between the edge density and the threshold density. We more precisely characterize this dependence later.) 
Specifically, we show that the fraction of vertices that remain
in each round decreases doubly exponentially, in a manner similar in spirit to existing analyses of
``balanced allocations'' load-balancing problems \cite{ABKU,Mitzenmacher}. 
Interestingly, we show in contrast that at edge densities above the threshold,
with
high probability 
$\Omega(\log n)$ rounds of peeling are required to
find the non-empty $k$-core.  Since most algorithms and data
structures that use peeling aim for an empty $k$-core, the fact that
empty $k$-cores are faster to find in parallel than non-empty ones appears particularly fortuitous. 


We then consider some of the details in implementation, focusing on
the algorithmic example of Invertible Bloom Lookup Tables (IBLTs)
\cite{GM}.  An IBLT stores a set of keys, with each key being hashed
into $r$ cells in a table, and all keys in a cell XORed together.
The IBLT defines a random hypergraph, where keys correspond to edges,
and cells to vertices.  As we describe later, recovering the set of
keys from the IBLT corresponds to peeling on the associated
hypergraph.  Applications of IBLTs are further discussed in \cite{GM};
they can be used, for example, for sparse recovery \cite{GM},
simple low-density parity-check codes \cite{MV}, and 
efficient set reconciliation across communication links \cite{EGUV}.
Our implementation demonstrates that our parallel peeling
algorithm yields concrete speedups, and provides insights into how to structure parallel 
peeling algorithms for efficiency in practice.  


\edit{Our results are closely related
to work of Achlioptas and Molloy \cite{achlio}. 
With different motivations than our own, they
show that at most $O(\log n)$ rounds of peeling are needed to find the (possibly non-empty)
$k$-core
both above and below the threshold edge density
$c_{k, r}^*$. Our $O(\log \log n)$ upper bound below the
threshold is an exponential improvement on their $O(\log n)$ bound,
while our $\Omega(\log n)$ lower bound above the threshold
demonstrates the tightness of their upper bound in this regime.
Perhaps surprisingly, we cannot find other analyses of parallel peeling
in the literature, although early work by Karp, Luby, and Meyer auf
der Heide on PRAM simulation uses an algorithm similar to peeling to
obtain $O(\log \log n)$ bounds for load balancing \cite{KLM}, and we use other
load balancing arguments \cite{ABKU,vocking} for inspiration. 
We also rely heavily on the framework established by Molloy \cite{Molloy}
for analyzing the $k$-core of random hypergraphs.}

Subsequent to our work, Gao \cite{gao} has provided an
alternative proof of an $O(\log \log n)$
upper bound on the number of rounds required 
to peel to an empty core when the edge density is below the 
threshold
$c_{k, r}^*$.
Her proof,
short and elegant, obtains a leading constant 
of $\frac{1}{\log\left(k(r-1)/r\right)}$, larger than the constant 
$\frac{1}{\log ((k-1)(r-1))}$ obtained through our more detailed analysis. 

\medskip
\noindent \textbf{Paper Outline.}
Section \ref{sec:belowthresh} characterizes the round complexity
of the peeling process when the edge density is a constant strictly below
the threshold $c_{k, r}^*$, showing that the number of rounds
required is
 $\frac{1}{\log((k-1)(r-1))}\log \log n + O(1)$.
 Section \ref{sec:above} shows that when the edge density
 is a constant strictly above the threshold $c_{k, r}^*$, the number of rounds
 required is $\Omega(\log n)$. Section \ref{sec:experiment}
 presents simulation results demonstrating that our 
 theoretical analysis closely matches the empirical evolution of 
 the peeling process.
  Section \ref{sec:GPU} describes our GPU-based IBLT implementation. 
\edit{Our IBLT implementation
must deal with a 
fundamental issue that is inherent to \emph{any}
 implementation of a parallel peeling algorithm,
 regardless of the application domain:
 the need to avoid peeling the same item  multiple times.}
 Consequently, 
the peeling process used in our IBLT implementation differs slightly from 
 the one analyzed in Sections \ref{sec:belowthresh} and \ref{sec:above}. 
 \edit{In Appendix \ref{sec:peelsubtable}
we formally analyze this variant of the parallel peeling process, demonstrating that it terminates significantly faster
than might be expected.}

As discussed above, the hidden constant in the additive $O(1)$ term in the upper bound of Section \ref{sec:belowthresh} depends on the distance between the edge density
and the threshold density $c^*_{k, r}$;  we refer to this distance as $\nu$. Section \ref{sec:distance} extends the analysis
 of Section \ref{sec:belowthresh} to precisely characterize 
this dependence, demonstrating 
that there is an additive $\Theta(1/\sqrt{\nu})$ term in the number of rounds
required.
Section \ref{sec:conclusion} concludes.

\section{Preliminaries}
For constants \edit{$r \geq 2$} and $c$, let $G_{n, cn}^r$ denote a random
hypergraph\footnote{When $r=2$ we have a graph, but we may use hypergraph when speaking generally.} with $n$ vertices and $cn$ edges, where each
edge consists of $r$ distinct vertices. Such hypergraphs are called \emph{r-uniform}, and we refer to $c$ as the \emph{edge density}
of $G_{n, cn}^r$.  Previous analyses of
random hypergraphs have determined the threshold
values $c_{k,r}^*$ such that when $c < c_{k,r}^*$, the $k$-core is empty with
probability $1-o(1)$, and when $c > c_{k,r}^*$, the $k$-core is non-empty
with probability $1-o(1)$.  \edit{Here and throughout this paper, $k,r \geq 2$, but
the special (and already well understood) case of $k = r =2$ is excluded from consideration.}
From \cite{Molloy}, the formula for $c_{k,r}^*$ is given
by
\begin{align}
\label{eq:cstar}
c_{k,r}^* = \min_{x > 0} \frac{x}{r(1-e^{-x}\sum_{j=0}^{k-2}\frac{x^j}{j!})^{r-1}}.
\end{align}
For example, we find that $c_{2,3}^* \approx 0.818$, $c_{2,4}^* \approx 0.772$ and $c_{3,3}^* \approx 1.553$.

\section{Below the Threshold}
\label{sec:belowthresh}

In this section, we characterize the number of rounds required by
the peeling process when the edge density $c$ is a constant strictly below the threshold 
density $c^*_{k, r}$.  Recall that 
this peeling process repeatedly removes vertices with degree less than $k$,
together with their incident edges. 
We prove the following theorem.

\begin{theorem}
\label{thm:loglog}
\edit{Let $k, r \geq 2$ with $k+r \geq 5$}, and let $c$ be a constant. With probability $1-o(1)$, the parallel peeling process for the $k$-core in a random hypergraph $G_{n, cn}^r$ with
edge density $c$ and
$r$-ary edges terminates after $\frac{1}{\log((k-1)(r-1))}\log \log n + O(1)$ rounds when $c < c^*_{k, r}$.
\end{theorem}

\editmm{
Theorem \ref{thm:loglog} is tight up to an additive constant. 
\begin{theorem}
\label{thm:logloglb}
\edit{Let $k, r \geq 2$ with $k+r \geq 5$}, and let $c$ be a constant. With probability $1-o(1)$, the parallel peeling process for the $k$-core in a random hypergraph $G_{n, cn}^r$ with
edge density $c$ and
$r$-ary edges requires $\frac{1}{\log((k-1)(r-1))}\log \log n - O(1)$ rounds to terminate when $c < c^*_{k, r}$.
\end{theorem}
}


In proving Theorems \ref{thm:loglog} and \ref{thm:logloglb}, we begin in Section \ref{sec:high}
with a high-level overview of our argument, before presenting
full details of the proof in Section \ref{sec:compl}.
\subsection{The High-Level Argument}
\label{sec:high}



The neighborhood of a node $v$ in a random $r$-uniform hypergraph can be accurately modeled as a branching
process, with a random number of edges adjacent to this vertex, and
similarly a random number of edges adjacent to each of those vertices,
and so on.  For intuition, we assume this branching process yields a
tree, and further that the number of adjacent edges is distributed
according to a discrete Poisson distribution with mean $rc$.  These
assumptions are sufficiently accurate for our analysis, as we later
prove.  (This approach is standard;  see e.g. \cite{DGMMPR,Molloy} for similar arguments.)

The intuition for the main result comes from considering the (tree)
neighborhood of $v$, and applying the following algorithm: for $1 \leq i
\leq t-1$, in round $i$, look at all the vertices at distance $t-i$
and delete a vertex if it has fewer than $k-1$ child edges. Finally,
in round $t$, $v$ is deleted if it has degree less than $k$. 
Vertex $v$ survives after $t$ rounds of peeling if and only
if it survives after $t$ rounds of this algorithm.

In what follows, we denote the probability that $v$ survives after $t$ rounds in
this model by $\lambda_t$, and the probability a vertex $u$
at distance $t-i$ from $v$ survives $i$ rounds by $\rho_i$.  

Here $\rho_0 = 1$. In this idealized setting, the following relationships hold:
 \vspace{-3mm}
\begin{align*}
\rho_i  &= \Pr(\text{Poisson}(\rho_{i-1}^{r-1}rc) \geq k-1),
\end{align*}
and similarly
 \vspace{-2mm}
\begin{align}
\label{eq:lambdai}
\lambda_i  &= \Pr(\text{Poisson}(\rho_{i-1}^{r-1}rc) \geq k).
\end{align}

The recursion for $\rho_i$ arises as follows:  each node $u$ has a Poisson distributed number of descendant edges with mean $rc$, 
and each edge has $r-1$ additional vertices that each survive $i-1$ rounds with probability $\rho_{i-1}$.  By the splitting
property of Poisson distributions \cite[Chapter 5]{MU}, the number of surviving descendant edges of $u$ is Poisson distributed
with mean $\rho_{i-1}^{r-1}rc$, and this must be at least $k-1$ for $u$ to itself survive the $i$th round.  

We use $\beta_i$ to represent the expected number of surviving descendant edges after $i-1$ rounds:
\begin{align*}
\beta_i &= \rho_{i-1}^{r-1}rc.
\end{align*}

Then,
\begin{align}
\rho_i &= 1 - e^{-\beta_i}\sum_{j=0}^{k-2}\frac{{\beta_i}^j}{j!}, \\
\label{eq:lambdairecursion}
\lambda_i &= 1 - e^{-\beta_i}\sum_{j=0}^{k-1}\frac{{\beta_i}^j}{j!}, \\
\label{eq:betairecursion}
\beta_{i+1} &= \bigg[1 - e^{-\beta_i}\sum_{j=0}^{k-2}\frac{{\beta_i}^j}{j!}\bigg]^{r-1}rc.
\end{align}

When $c < c_k^*$, which is the setting where we know the core becomes empty, we have $\lim_{t \to \infty} \rho_t = 0$, so  $\lim_{t \to \infty} \beta_t = 0$. Thus, for any constant $\tau > 0$, we can choose a constant $I$ such that $\beta_I \leq \tau$.

For any $x > 0$ and $k \geq 2$, by basic calculus, we have 
\begin{equation}
\label{eq:calculus}1 - e^{-x}\sum_{j=0}^{k-2}\frac{x^j}{j!} \leq \frac{x^{k-1}}{(k-1)!}. \end{equation}
Applying this bound to $\beta_{I+1}$ gives
\ifFull
\begin{align*}
\beta_{I+1} &\leq \bigg[\frac{\beta_I^{k-1}}{(k-1)!}\bigg]^{r-1}rc \\
&\leq \beta_I^{(k-1)(r-1)}\frac{rc}{[(k-1)!]^{r-1}}.
\end{align*}
\else
$$\beta_{I+1} \leq \bigg[\frac{\beta_I^{k-1}}{(k-1)!}\bigg]^{r-1}rc \leq \beta_I^{(k-1)(r-1)}\frac{rc}{[(k-1)!]^{r-1}}.$$
\fi
Using induction, we can show that
\begin{align*}
\beta_{I+t} \leq \beta_I^{[(k-1)(r-1)]^t}\bigg[\frac{rc}{[(k-1)!]^{r-1}}\bigg]^{\frac{[(k-1)(r-1)]^t-1}{(k-1)(r-1)-1}}.
\end{align*}
If $\frac{rc}{[(k-1)!]^{r-1}} \geq 1$, we can apply the upper bound 
$$\beta_{I+t} \leq \big[\tau\big(\frac{rc}{[(k-1)!]^{r-1}}\big)^{\frac{1}{(k-1)(r-1)-1}}\big]^{[(k-1)(r-1)]^t},$$ and if $\frac{rc}{[(k-1)!]^{r-1}} < 1$, then $\beta_{I+t} \leq \tau^{[(k-1)(r-1)]^t}$. Setting $$\tau' = \max\Big(\tau\big(\frac{rc}{[(k-1)!]^{r-1}}\big)^{\frac{1}{(k-1)(r-1)-1}}, \tau\Big)$$ gives

\begin{align}
\label{eq:betaipt}
\beta_{I+t} \leq (\tau')^{[(k-1)(r-1)]^t}.
\end{align}
Pick $\tau$ such that $\tau' < 1$. By {Equations \eqref{eq:lambdairecursion}, \eqref{eq:calculus}, and \eqref{eq:betaipt}},
it holds that
\ifFull
\begin{align*}
\lambda_{I+t} &\leq \frac{\beta_{I+t}^k}{k!} \\
&\leq \frac{(\tau')^{k[(k-1)(r-1)]^t}}{k!}.
\end{align*}
\else
$$ \lambda_{I+t} \leq \frac{\beta_{I+t}^k}{k!} \leq \frac{(\tau')^{k[(k-1)(r-1)]^t}}{k!}.$$
\fi

Solving $\frac{(\tau')^{k[(k-1)(r-1)]^t}}{k!} < n^{-2}$ gives $t > \frac{1}{\log((k-1)(r-1))}\log \log n + O(1)$. This shows that it takes $t^* = \frac{1}{\log((k-1)(r-1))}\log \log n + O(1)$ rounds
for $\lambda_tn = o(1)$ in our idealized setting.  

%

\medskip
\editmm{
\noindent {\em Remark:} One can similarly show that with probability $1-o(1)$ 
termination requires at least $\frac{1}{\log((k-1)(r-1))}\log \log n - O(1)$ rounds 
for any constant $c < c^*_{k, r}$ when $k+r \geq 5$ as well in the idealized setting.  Starting from Equation
\eqref{eq:calculus}, we can show
\begin{equation*}
1 - e^{-x}\sum_{j=0}^{k-2}\frac{x^j}{j!} \geq \frac{x^{k-1}}{C(k-1)!}\end{equation*}
for some constant $C$ and sufficiently small $x > 0$.  
It then follows by similar arguments that
\begin{align*}
\beta_{I+t} \geq (\tau'')^{[(k-1)(r-1)]^t}
\end{align*}
for suitable constants $I$ and $\tau''$.
In particular, we can choose a $t$ that is $\frac{1}{\log((k-1)(r-1))}\log \log n - O(1)$, 
so that the number of vertices that remain to be peeled after $t$ rounds is stil
at least $n^{2/3}$ in expectation.  As we show later (cf. Section~\ref{sec:thm:loglogb}),
the fact that this expectation is large
implies that the number of surviving vertices after this many rounds is bigger than 0 with
probability $1-o(1)$, in both the idealized setting considered in this overview,
and in the actual random process corresponding to $G_{n, cn}^r$.}


\subsection{Completing the Argument}
\label{sec:compl}

\subsubsection{Preliminary Lemmas}
To formalize the argument outlined in Section \ref{sec:high}, we first note that instead of working
in the $G_{n, cn}^r$ model, we adopt the standard approach of having
each edge appear independently in the hypergraph with probability $q= cn/{n
  \choose r}$.  It can be shown easily that the result in this model (which we denote by $G^r_c$)
implies that the same result holds in the $G_{n, cn}^r$ model (see e.g. \cite{DGMMPR,KMW,Molloy}). Here, we sketch a simple version of this
standard argument for this setting.
\begin{lemma}
\label{lem:equivmodels}
Let $G^r_c$ be an $r$-uniform hypergraph on $n$ vertices in which each edge appears independently with probability $q= cn/{n \choose r}$.
Suppose that for all $c < c^*_{k,r}$, peeling succeeds on $G^r_c$ in $\frac{1}{\log((k-1)(r-1))}\log \log n + O(1)$ rounds with probability
$1-o(1)$. Then peeling similarly
succeeds on $G_{n, cn}^r$ in $\frac{1}{\log((k-1)(r-1))}\log \log n + O(1)$ rounds with probability $1-o(1)$  for all $c < c^*_{k,r}$.
\end{lemma}

\begin{proof} (Sketch)  Let $c'$ be a constant value (independent of $n$) with $c < c' < c^*_{k,r}$.   
With probability $1-o(1)$, parallel peeling will succeed for the hypergraph $G^r_{c'}$ in the appropriate number of rounds. Moreover, by standard Chernoff
bounds, $G^r_{c'}$ will have greater than $cn$ edges with probability $1-o(1)$.  Since the probability that the parallel peeling algorithm succeeds after
any number of rounds monotonically decreases with the
addition of random edges, it holds that the success probability is also $1-o(1)$ when the graph is chosen from 
$G_{n, cn}^r$.  \edit{(Formally, one would first condition on the number of edges chosen on the graph $G^r_{c'}$;  given the number of edges, the actual edges selected are random.  Hence we can couple the choice of the first $cn$ edges between the two graphs.)}
\end{proof}

We will also need the following lemma, which is essentially due to Voll \cite{voll}. We provide the proof for completeness.  (We have not aimed to optimize the constants.)
\begin{lemma}
\label{lem:polylog}
For any constants $c, r, c_1>0$, there is a constant $c_2>0$ such that with probability $1-1/n$, for all vertices $v$ in $G_c^r$, the neighborhood of distance $c_1 \log \log n$ around $v$ contains at most $\log^{c_2}n$ vertices.  
\end{lemma}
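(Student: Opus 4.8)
The plan is to reduce the statement to a tail bound for a Galton--Watson process and then take a union bound over the $n$ choices of $v$. Set $d := c_1 \log\log n$ and, for a fixed vertex $v$, let $N_d(v)$ be the number of vertices within distance $d$ of $v$; it suffices to find a constant $c_2 = c_2(c_1, r, c)$ such that $\Pr[\,N_d(v) > \log^{c_2} n\,] = o(1/n)$, since the union bound over $v$ then gives the claim. Exploring the hypergraph by breadth-first search from $v$ and revealing edges only when first needed, the number of previously-unseen vertices reached from any frontier vertex $u$ is, conditionally on all that has been revealed, stochastically dominated by $\xi := (r-1)\cdot\mathrm{Bin}\!\big(\binom{n-1}{r-1},q\big)$: the vertex $u$ lies in at most $\binom{n-1}{r-1}$ as-yet-unrevealed potential edges, each present independently with probability $q$, and each contributes at most $r-1$ new vertices. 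Since $\binom{n-1}{r-1}q = cr$, the law $\xi$ has constant mean $\mu := cr(r-1)$ and sub-Poisson tails, and (by the usual level-by-level coupling) $N_d(v)$ is stochastically dominated by the total population $N_d$ through generation $d$ of a Galton--Watson process with offspring law $\xi$. So it is enough to bound $\Pr[\,N_d \ge \log^{c_2} n\,]$.

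The core step is an exponential-moment bound for $N_d$. Let $g(s) := \mathbb{E}[s^{\xi}] = \big(1-q+q\,s^{r-1}\big)^{\binom{n-1}{r-1}}$ be the offspring probability generating function. Using $(1+\delta)^{r-1} \le 1 + (r-1)2^{r-2}\delta$ for $\delta\in[0,1]$, one gets $g(1+\delta) \le e^{A\delta}$ for all $\delta\in[0,1]$, where $A := cr(r-1)2^{r-2}$ is a constant. Decomposing the tree at its root gives the recursion $h_d(s) := \mathbb{E}[s^{N_d}] = s\cdot g\big(h_{d-1}(s)\big)$ with $h_0(s)=s$. Put $B := \max(2Ae^{A},\,2)$ and $\epsilon := \tfrac12 B^{-d}$, and write $h_i(1+\epsilon) = 1+\delta_i$. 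Since $e^{A\delta} \le 1 + Ae^{A}\delta$ for $\delta\in[0,1]$, the recursion yields $\delta_i \le \epsilon + B\,\delta_{i-1}$ with $\delta_0 = \epsilon$, hence inductively $\delta_i \le 2\epsilon B^i = B^{\,i-d} \le 1$ for every $i \le d$; in particular $h_d(1+\epsilon) \le 2$. Markov's inequality then gives, with $M := \log^{c_2} n$,
\begin{equation*}
\Pr[\,N_d \ge M\,] \;\le\; \frac{h_d(1+\epsilon)}{(1+\epsilon)^{M}} \;\le\; 2\,e^{-\epsilon M/2} \;=\; 2\exp\!\Big(-\tfrac14\,B^{-d}\log^{c_2}n\Big).
\end{equation*}
Because $d = c_1\log\log n$ we have $B^{-d} = (\log n)^{-c_1\log B}$, so the exponent equals $-\tfrac14(\log n)^{\,c_2 - c_1\log B}$; choosing $c_2 := c_1\log B + 2$ makes it $-\tfrac14\log^2 n$, whence $\Pr[\,N_d(v)\ge\log^{c_2}n\,] \le 2e^{-\frac14\log^2 n} = n^{-\omega(1)}$, and the union bound over the $n$ vertices finishes the proof.

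The step I expect to be the obstacle is squeezing the failure probability down to $o(1/n)$. The first-moment estimate $\mathbb{E}[N_d(v)] \le \mu^{d+1} = (\log n)^{O(c_1)}$ combined with Markov only yields a bound of order $\log^{-\Omega(1)} n$, and merely conditioning on all vertex degrees being $O(\log n)$ leaves the quasi-polynomial bound $(\log n)^{O(\log\log n)}$, neither of which survives the union bound. One has to extract genuine concentration from the full exponential moment $h_d$, and the delicate point is the calibration of $\epsilon$: it must be small enough (of order $B^{-d}$) for the $d$-fold composition defining $h_d$ to stay bounded, yet the resulting deviation parameter $\epsilon M$ must still be $\omega(\log n)$ so that $2\,e^{-\epsilon M/2}$ beats $1/n$ --- which is precisely why $M$ must be taken as a sufficiently large fixed power of $\log n$, rather than a large constant multiple of $\mathbb{E}[N_d(v)]$.
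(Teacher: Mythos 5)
Your proof is correct, and it takes a genuinely different route from the paper's. The paper does not prove the tail bound from scratch: it cites a result from Voll's dissertation giving $\Pr\bigl(N_d \geq \tfrac{1}{2}\log(1/\epsilon)(3c)^d\bigr) \leq d\epsilon$ for a Poisson branching process, plugs in $\epsilon = 1/n^2$, and then transfers the conclusion to the binomial branching process and to hypergraph neighborhoods with an informal ``this only affects the calculations by constant factors'' remark. You instead give a self-contained argument: stochastic domination of the BFS exploration by a Galton--Watson process with offspring law $(r-1)\cdot\mathrm{Bin}\bigl(\binom{n-1}{r-1},q\bigr)$, followed by a quantitative bound on the exponential moment $h_d(1+\epsilon)$ of the total progeny via the composition recursion $h_d = s\cdot g(h_{d-1})$, with $\epsilon$ calibrated to $B^{-d}$ so that the $d$-fold composition stays bounded while $\epsilon M$ remains $\mathrm{poly}(\log n)$. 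The details check out: the offspring mean $\binom{n-1}{r-1}q = cr$ is computed correctly, the bound $g(1+\delta)\le e^{A\delta}$ on $[0,1]$ is valid, the unrolled geometric sum gives $\delta_i \le 2\epsilon B^i \le 1$ so the induction hypothesis is maintained, and the final failure probability $2e^{-\frac14\log^2 n}$ comfortably survives the union bound (indeed it gives $1-o(1/n^a)$ for any constant $a$, stronger than needed). Your approach buys self-containedness, an explicit value of $c_2$ in terms of $c_1$, $r$, and $c$, and it handles the hypergraph/binomial setting directly rather than by analogy; the paper's approach buys brevity at the cost of outsourcing the key estimate and leaving the transfer steps to the reader. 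Your closing diagnosis of where the difficulty lies --- that first-moment or degree-conditioning arguments cannot reach $o(1/n)$ and one must exploit the full exponential moment with a carefully scaled tilt --- is exactly right.
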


\begin{proof}
We follow the approach used in the dissertation of Voll \cite[Lemma 3.3.1]{voll}. 
Denote by $N_d$ the number of vertices at distance $d$ in the neighborhood of a root vertex $u$.
We prove inductively on $d$ that 
$$\Pr(N_d > (6cr^2)^d\log(1/\epsilon)) \leq d\epsilon$$
for $d$ up to $c_1 \log \log n$ and $\epsilon = 1/n^2$.  
The claim then follows by a union bound over all $n$ vertices $u$.

For convenience we assume $6cr \geq 1$;  
the argument is easily modified if this is not the case, instead proving $\Pr(N_d > r^d\log(1/\epsilon)) \leq d\epsilon$.
Recall that the number of edges adjacent to $u$ is dominated by a binomial random variable 
$B\left ( {n-1 \choose r-1}, q \right )$, which has mean $cr$.  The number of vertices adjacent
to $u$ via these edges is dominated by $r-1$ times the number of edges. 
When $d = 1$, we find that the number of neighboring
edges of the root, which we denote by $N'_0$, is at most $6cr \log(1/\epsilon)$ with probability bounded above by
\ifFull
\begin{eqnarray*}
{{n-1 \choose r-1} \choose 6cr \log(1/\epsilon) } q^{6cr \log{1/\epsilon}} & \leq & \left ( \frac{ecr}{6cr \log(1/\epsilon)} \right)^{6cr \log(1/\epsilon)} \\
& \leq & \epsilon.
\end{eqnarray*}
\else
$${{n-1 \choose r-1} \choose 6cr \log(1/\epsilon) } q^{6cr \log{1/\epsilon}} \leq \left ( \frac{ecr}{6cr \log(1/\epsilon)} \right)^{6cr \log(1/\epsilon)}
\leq \epsilon.$$
\fi
This gives an upper bound of $6cr^2 \log(1/\epsilon)$ on $N_1$.

For the induction, we use Chernoff bounds, noting that $N_{d+1}$ can be bounded as follows.  Conditioned on the event that $N_d \leq \log(1/\epsilon)(6cr^2)^d$,
we note the number of edges adjacent to nodes of distance $d$ is bounded above by the sum of $N_d$ independent binomial
random variables as above, and each such edge generates at most $r-1$ nodes for $N_{d+1}$.  Let $N'_d$ be the number of such edges.  Then we have
\begin{eqnarray*}
\Pr\left(N_{d+1} > (6cr^2)^{d+1}\log(1/\epsilon)\right) \leq\\
 \Pr\left(N_{d+1} > (6cr^2)^{d+1}\log(1/\epsilon)~|~N_{d} > (6cr^2)^{d}\log(1/\epsilon)\right)
\mbox{} +\\
 \Pr\left(N_{d+1} > (6cr^2)^{d+1}\log(1/\epsilon)~|~N_{d} \leq
 (6cr^2)^{d}\log(1/\epsilon)\right)  \leq \\
                                              d\epsilon + \Pr\left(N'_{d} > \left( (6cr^2)^{d}\cdot (6cr)\right) \log(1/\epsilon)~|~N_{d} \leq \log(1/\epsilon)(6cr^2)^{d}\right).
\end{eqnarray*}
We bound the last term via a Chernoff bound, noting that the sum of the $N_d$ independent binomial random variables {$B({n-1 \choose r-1}, q)$ has the same distribution as the sum of $N_d {n-1 \choose r-1}$ independent Bernoulli random variables that take value 1 with probability $q$}.
We use the Chernoff bound from \cite[Theorem 4.4, part 3]{MU}, which says that if $X$ is the sum of independent 0-1 trials and $E[X] = \mu$, then for $R \geq 6\mu$,
$$\Pr(X \geq R) \leq 2^{-R}.$$
Hence, 
\begin{eqnarray*}\Pr\left(N'_{d} > \log\left(1/\epsilon\right)\left(6cr^2\right)^{d} \cdot \left(6cr\right)~|~N_{d} \leq
 \log\left(1/\epsilon\right)\left(6cr^2\right)^{d}\right)\\
  \leq 2^{-\log\left(1/\epsilon\right)\left(6cr^2\right)^{d}\cdot\left(6cr\right)} \leq \epsilon,
\end{eqnarray*}
completing the induction and giving the lemma.  
\end{proof}

Let $E$ be the event that the parallel peeling process
on $G_c^r$ terminates after
$\frac{1}{\log((k-1)(r-1))}\log \log n + O(1)$ rounds.
Our goal is to show that $\Pr[E]=1-o(1)$.
Let $c_1$ any $c_2$ be the constants appearing in Lemma \ref{lem:polylog}. 
Let $E_1$ denote the event that, for all vertices $v$ in $G_c^r$, the neighborhood of distance $c_1 \log \log n$ around $v$ contains at most $\log^{c_2}n$ vertices, and let
$\bar{E}_1$ denote the event that $E_1$ does not occur.  
\begin{lemma}
\label{lemma:stupid} It holds that 
$\Pr[E] \geq \Pr[E| E_1] - 1/n$.
\end{lemma}
\begin{proof}
Note that 
\begin{equation}
\label{eq:stupid}  
\Pr[E] = \Pr[E | E_1] \Pr[E_1] + \Pr[E | \bar{E}_1] \Pr[\bar{E}_1].
\end{equation}
By Lemma \ref{lem:polylog}, $\Pr[E] \geq 1-1/n$. 
Hence, by Equation \eqref{eq:stupid},
$\Pr[E] \geq \Pr[E | E_1] (1 - 1/n) \geq \Pr[E| E_1] - 1/n$. 
\end{proof}

Lemma \ref{lemma:stupid} implies that, if we show that $\Pr[E | E_1] = 1-o(1)$, 
then $\Pr[E] = 1-o(1)$ as well. This is the task to which we now turn.

\subsubsection{Completing the Proof of Theorem \ref{thm:loglog}}
\label{sec:thm:loglog}
It will help us to introduce some terminology.  We will recursively
refer to a vertex other than the root as {\em peeled} in round $i$ if
it has fewer than $k-1$ unpeeled children edges (that is, edges to children) at the beginning of
the round; similarly, we say that an edge $e$ is peeled at round $i$ if some vertex incident to $e$ is
peeled.  We refer to an edge or vertex that is not
peeled as {\em unpeeled}.  At round $0$, all edges and vertices begin as unpeeled.  For
the root, we require there to be fewer than $k$ unpeeled children edges before it
is peeled.

\begin{proof}[Proof of Theorem \ref{thm:loglog}]
We analyze how the actual
branching process deviates from the idealized branching
process analyzed in Section \ref{sec:high}, showing the deviation leads to only lower order effects.  We view the
branching process as generating a breadth first search (BFS) tree of depth at most $O(\log \log n)$ rooted at the
initial vertex $v$.  To clarify, breadth first search trees are defined
such that once a vertex $u$ is expanded in the breadth
first search, $u$ cannot be the child of any vertex $u'$ in the tree that is expanded after $u$.


\begin{lemma}
\label{lemma:expandu}
When expanding a node $u$ in the BFS tree
rooted at vertex $v$ in $G_c^r$, let 
$Z_u$ denote the number of already expanded vertices
in the BFS tree, and let $N(u)$ denote the number of child edges of $u$ in the BFS tree.
If $Z_u = \text{polylog}(n)$, 
then
$N(u)$ is a random variable with total variation distance at most $\text{polylog}(n)/n$
from $\text{Poisson}(rc)$. 
\end{lemma}

\begin{proof}
The number of
children edges incident to $u$ in $G_c^r$ is a binomial random variable 
$B(M/q, q)$, 
where the mean $M$ equals ${{n - Z_u -1} \choose {r-1}}q$.  
Since
$Z_u$ is polylogarithmic in $n$,  

\begin{eqnarray*} M={{n - Z_u -1} \choose {r-1}}q =   {n-1 \choose r-1} q (1-\text{polylog}(n)/n) = rc (1-\text{polylog}(n)/n).\end{eqnarray*} 
We invoke Le Cam's Theorem \cite{lecam} (see Appendix \ref{app:lecam} for the statement), which bounds the total variation distance between 
binomial and Poisson distributions, to conclude that
the total variation distance between $B(M/q, q)$
and $\text{Poisson}(M)$ is at most $Mq \leq rc (cn/{n \choose r})= O(1/n^{r-1})$. Meanwhile, the total variation distance between 
$\text{Poisson}(M)$ and $\text{Poisson}(rc)$ is $\text{polylog}(n)/n$,
and so by the triangle inequality, 
the total variation distance between $\text{Poission}(rc)$ and $B(M/q, q)$
is also $\text{polylog}(n)/n$. 
\end{proof}


\begin{lemma}
\label{lemma:tvbound1}
Let $X_1(v)$ denote the random variable
describing the tree of depth $i=O(\log \log n)$ rooted at $v$ 
in the idealized branching process. Let $X_2(v)$ denote the random
variable describing
the BFS tree of depth $i$ rooted at $v$ in $G_c^r$, conditioned on event
$E_1$ occurring. 
The total variation distance between $X_1(v)$ and $X_2(v)$
is at most $\text{polylog}(n)/n$.
\end{lemma}
\begin{proof}
We describe a standard coupling of the actual branching process
and the idealized branching process. 
That is, we imagine running two different experiments $(Y_1(v), Y_2(v))$, with $Y_1(v)$ corresponding to the idealized branching process, and $Y_2(v)$ corresponding to the actual branching process conditioned on event $E_1$ occurring. 
The two branching processes will not be independent, yet $Y_1(v)$ and $Y_2(v)$ will have the same distribution as the idealized and
actual branching processes $X_1(v)$ and $X_2(v)$ respectively. We will 
show that for any $i = O(\log\log n)$, with probability at least $1-\text{polylog}(n)/n$ the two
experiments \emph{never} deviate from each other. 
It follows that any event that occurs in $X_1(v)$ with probability 
$p$ occurs in $X_2(v)$ with probability $p \pm \text{polylog}(n)/n$,
and hence the total variation distance between $X_1(v)$ and $X_2(v)$
is at most $\text{polylog}(n)/n$ as desired.

The experiments $Y_1(v)$ and $Y_2(v)$ proceed as follows. Both $Y_1(v)$
and $Y_2(v)$ begin by expanding a node $v$. 
Recall that 
the 
number of child edges of $v$ in the idealized branching process 
has distribution $\mu_{\text{ideal}}$, where $\mu_{\text{ideal}}$ denotes a discrete Poisson random variable with mean $rc$. 
Let $\mu_v$ denote the distribution of $N(v)$
in the real branching process conditioned on event $E_1$ occurring. 
Define $\alpha_v(x) = \min\{\mu_{\text{ideal}}(x), \mu_{v}(x)\}$. 

Let $\gamma_v$ denote the total variation distance between $\mu_{\text{ideal}}$
and $\mu_v$; by Lemma \ref{lemma:expandu}, $\gamma_v \leq \text{polylog}(n)/n$. 
Note that
$\sum_x \alpha_v(x) = 1-\gamma_v$, and hence $\alpha'_v=\alpha_v/(1-\gamma_v)$ is a probability
distribution. 

At the start of experiments $X_1(v)$ and $X_2(v)$, we toss a coin with a probability of heads equal to $1-\gamma_v$. 
If it comes up heads, we choose $N$ from the probability
distribution $\alpha_v'$, and set the number of child edges of $v$
in both $Y_1(v)$ and $Y_2(v)$ to be $N$, and choose identical identifiers for their children
uniformly at random from $[n] \setminus \{v\}$ without replacement. 
If it comes up tails, we choose the number of child edges of $v$ in
$Y_1(v)$
according to the probability distribution $\sigma_{\text{ideal}, v}(x)$ defined via:
\[\begin{cases} 
\frac{\mu_{\text{ideal}}(x) - \mu_v(x)}{\gamma_v} & \text{if } \mu_{\text{ideal}}(x) > \mu_v(x)\\
0 & \text{otherwise,}
\end{cases}
\]
\noindent choose the number of child edges of $v$ in $Y_2(v)$ according
to the distribution $\sigma_{\text{real}, v}(x)$ defined via:
\[\begin{cases}
 \frac{\mu_v(x)-\mu_{\text{ideal}}(x) }{\gamma_v} & \text{if }  \mu_v(x) > \mu_{\text{ideal}}(x)\\
0 & \text{otherwise,}
\end{cases}
\]
\noindent and independently choose identifiers for their children at random
from $[n]\setminus \{v\}$, without replacement.

Under these definitions, the number of child edges of $v$ in $Y_1(v)$
is distributed according to $\mu_{\text{ideal}}$, while the number of child
edges of $v$ in $Y_2(v)$ is distributed according to $\mu_v$. That is,
these quantities have the correct marginals, even though $Y_1(v)$ and $Y_2(v)$
are not independent. 

If the coin came up tails, we then run $Y_1(v)$ and $Y_2(v)$ independently of each other
for the remainder of the experiment. If the coin came up heads, 
we repeatedly expand nodes in both $X_1(v)$ and $X_2(v)$ as follows.
When expanding a node $u$, we let 
 $\mu_u$ denote the distribution of $N(u)$
in the real branching process,
and we define $\alpha_u$, $\gamma_u$, $\alpha'_u$, $\sigma_{\text{ideal}, u},$ and $\sigma_{\text{real}, u}$ analogously. 
We toss a new coin with a probability of heads equal to $1-\gamma_u$. If
the new coin comes up heads, we choose $N$  from the probability distribution
$\alpha'_u$
and set the number of child edges of $u$
in both $Y_1(v)$ and $Y_2(v)$ to be $N$, and choose identical identifiers for their children
uniformly at random from $[n] \setminus T$, where $T$ is the set of nodes already
appearing in the (identical) trees. 
If the new coin comes up tails, we choose the number of child edges of $u$ in $Y_1(v)$ according to $ \sigma_{\text{ideal}, u}$,
choose the number of child edges of $u$ in $Y_2(v)$ according to  $\sigma_{\text{real}, u}$,
and independently choose the identifiers of the children at random from the set of nodes not already appearing in the respective tree, without replacement.

It is straightforward to check that the marginal distributions of $Y_1(v)$ and $Y_2(v)$
are the same as $X_1(v)$ and $X_2(v)$. Moreover, each time a node 
$u$ is expanded in $Y_2(v)$, the processes deviate from each other
with probability at most $\gamma_u$. Since $X_2(v)$
describes the actual branching process conditioned on event $E_1$ occurring,
Lemma \ref{lemma:expandu} guarantees that 
 $\gamma_u \leq \text{polylog}(n)/n$ for all nodes $u$
 that are ever expanded. 
Moreover, at most $\text{polylog}(n)$ nodes $u$ 
are ever expanded in $Y_2(v)$. 
By the union bound over all $\text{polylog}(n)$ nodes $u$
ever expanded in $Y_2(v)$, it holds that $Y_1(v)$ and $Y_2(v)$ \emph{never}
deviate with probability at least $1-\text{polylog}(n)/n$.
\end{proof}

Recall that 
$\lambda_i$ is the probability that the root node $v$ survives after $i$ rounds of the idealized branching process. Let $\lambda_i^{(a)}$ denote the corresponding value in the actual branching process conditioned on event $E_1$ occurring. That is, 
\begin{equation} \label{eq:lambdaia} \lambda_i^{(a)}=\Pr[v \text{ survives } i \text{ rounds of peeling in } G_c^r|E_1].\end{equation}
By symmetry, the probability on the right hand side of Equation \eqref{eq:lambdaia} is independent of the node $v$.

Lemma \ref{lemma:tvbound1} implies that $\lambda_i$ and $\lambda_i^{(a)}$ differ by at most 
$\text{polylog}(n)/n$ for all $i = O(\log\log n)$, and thus 
$$\lambda_{t^*}^{(a)} \leq 
\lambda_{t^*} + \text{polylog}(n)/n \leq  \text{polylog}(n)/n.$$ 
It remains to improve the upper bound on $\lambda^{(a)}_i$ to
$o(1/n)$, as this will allow us to apply a union bound over
all the vertices $v$ to conclude that with probability $1-o(1)$, no vertex survives after $i$ rounds of peeling.  For expository purposes, we first show how to do this assuming the
neighborhood is a tree. We then show how to handle the general case,
in which vertices may be duplicated as we
expand the neighborhood of the root node $v$.  When duplicates appear, parts of our
neighborhood tree expansion are no longer independent, as in our
idealized analysis, but we are able to modify the analysis to cope with these dependencies.

\medskip
\edit{
\noindent {\bf Bounding $\lambda_i$ for Trees:}  Assume for now that the neighborhood of the root node $v$ is a
tree. Note that for the root to be unpeeled after $i$ rounds, there must be at least $k \geq 2$
adjacent unpeeled edges, corresponding to at least 2 (distinct, from
our tree assumption) unpeeled children vertices 
after $i-1$ rounds.  
We have shown that, conditioned on event $E_1$ occurring, each vertex remains unpeeled for at most $t^* =
\frac{1}{\log((k-1)(r-1))}\log \log n + O(1)$ rounds with probability
$O(\text{polylog}(n)/n)$.  The 2 unpeeled children vertices can be chosen from the at most 
polylogarithmic number of children of $v$ (the polylogarithmic bound follows
from the occurrence of event $E_1$).
This gives only ${\text{polylog}(n) \choose 2} = \text{polylog}(n)$
possible sets of choices.  Hence, via a union bound, the probability that $v$ survives at least
$t^*+1$ rounds is bounded above by $\text{polylog}(n) \cdot \left(\text{polylog}(n)/n\right)^2 = O(\text{polylog}(n)/n^2) = o(1/n)$.  We can take a union bound over all vertices
for our final $1-o(1)$ bound. 
}

\medskip
\edit{
\noindent {\bf Dealing with duplicate vertices:}  Finally, we now explain that, with probability $1-o(1)$, we need to worry
only about a single duplicate vertex in the neighborhood for all
vertices, and further that this only adds an additive constant to the
number of rounds required. Conditioned on event $E_1$
occurring, for any fixed node $v$ it holds that as we expand the neighborhood of $v$ 
of distance $O(\log \log n)$ using
breadth first search, the probability of a duplicate vertex occurring during any expansion step is only 
$\text{polylog}(n)/n$.  As the neighborhood contains only a polylogarithmic number of vertices,
the probability of having at least two duplicate vertices within the neighborhood of $v$ is
$o(1/n)$. By a union bound over all $n$ nodes $v$, with probability $1-o(1)$, 
\emph{no} node $v$ in the graph will have two duplicated vertices in the BFS tree
rooted at $v$. We 
refer to this event as $E_2$, and we condition
on this event occurring for the remainder of the proof. This conditioning
does not affect our estimate of $\Pr[E | E_1]$ by more than an additive $o(1)$ factor, for the same reason
conditioning on $E_1$ did not affect our estimate of $\Pr[E]$ by more than an additive $o(1)$
factor (cf. Lemma \ref{lemma:stupid}). Indeed, 
\begin{eqnarray*}
\Pr[E|E_1] = \Pr[E | E_1 \cap E_2] \Pr[E_2] + \Pr[E | E_1 \cap \bar{E}_2] \Pr[\bar{E}_2]\\
\geq \Pr[E|E_1 \cap E_2] (1-o(1)).
\end{eqnarray*}
It is therefore sufficient to show that, conditioned on event $E_1$ occurring,
having one duplicate vertex in the neighborhood only adds a constant number of rounds to the parallel
peeling process. 

We first consider the case when $r \geq 3$, so that if the root
remains unpeeled it has at least four (not necessarily distinct)
unpeeled vertices at distance $1$ from it, corresponding to the
at least two edges (each with at least two other vertices, as $r \geq 3$) that prevent the
root from being peeled.  If we encounter a duplicate vertex, we
pessimistically assume that it prevents two vertices adjacent to the
root -- namely, its ancestors -- from being peeled.  
Even with this pessimistic assumption, 
simply adding one additional layer of expansion in the neighborhood
allows the root to be peeled by round $t^* + 2$ with probability
$1-o(1/n)$, as we now show.

Consider what happens in $t^* + 2$ rounds when there is 1 duplicate
vertex. As stated in the previous paragraph, for the root to remain unpeeled, 
it must have at least four neighbors, and at most two of
these four vertices is a duplicate or has a descendant that is a
duplicate. Thus, in order for the root to remain unpeeled after $t^* + 2$ rounds, 
at least two neighbors, $u_1$ and $u_2$, of the root must remain unpeeled after $t^* + 1$ rounds, when the neighborhoods of
$u_1$ and $u_2$ for $t^*+1$ rounds are trees. By our previous calculations,
the probability that $u_1$ and $u_2$ both remain unpeeled after $t^*+1$ rounds
when their neighborhoods are trees is $O(\text{polylog}(n)/n^2)$. Thus,
we take a union bound over the at most $\text{polylog}(n)$ pairs of descendants of the root, and conclude that the probability that the root survives $t^*+2$ rounds of the peeling process is
$1-o(1/n)$. 

Finally, union bounding over all nodes $v$ in $G_c^r$, 
we conclude that \emph{all} nodes in $G_c^r$ are peeled after $t^*+2$ rounds
 with probability $1-o(1)$. That is, we have shown that $\Pr[E | E_1] = 1-o(1)$.

The case where $r=2$ and $k \geq 3$ requires a bit more care.   Let us consider what happens after
$t^*+3$ rounds in this case.  For 
the root note $v$ to remain unpeeled, $v$ must have at least $k \geq 3$ 
incident edges that remain unpeeled after $t^* + 2$ rounds of peeling.
This 
corresponds to at least $3$ (not necessarily distinct) unpeeled children of $v$.
Thus, even if there is one duplicate vertex in the neighborhood of $v$, $v$ must have at least
one unpeeled child $u$ whose neighborhood of distance $t^*+2$ is a tree. 
This vertex must have at least two children (grandchildren of the root) 
that must remain unpeeled for $t^*+1$ rounds. 
Thus, by our previous calculations, the probability that $u$
 remains unpeeled after $t^*+2$ rounds is at most
$\text{polylog}(n)/n^2$. Again we can union bound over
the at most $\text{polylog}(n)$ children $u$ of the root node $v$ to obtain a $1-o(1/n)$ probability
that $v$ remains unpeeled after $t^*+3$ rounds in this case.  

We have shown that $\Pr[E | E_1] = 1-o(1)$, and by Equation \eqref{eq:stupid},
it follows that $\Pr[E] = 1-o(1)$ as well.
}
\end{proof}

\noindent {\em Remark:} One can obtain better than $1-o(1)$ bounds on the probability of 
terminating after $\frac{1}{\log((k-1)(r-1))}\log \log n + O(1)$ rounds when $c < c^*_{k, r}$.
For example, $1-o(1/n)$ bounds are possible when $r > 3$;  the argument requires 
considering cases for the possibility
that 2 vertices are duplicated in the neighborhood
around a vertex.  However, one cannot hope for probability bounds 
of $1-o(1/n^a)$ for an arbitrary constant $a$ when duplicate edges
may appear, as is typical for hashing applications.
The probability the $k$-core
is not empty because $k$ edges share the same $r$ vertices is 
$\Omega(n^{-kr+k+r})$ for constant $k$, $r$, and graphs with a linear
number of edges, which is already $\Omega(1/n)$ for $k=2$ and $r=3$ 
\edit{or for $k=3$ and $r=2$.}  

\editmm{
\subsubsection{Completing the Proof of Theorem \ref{thm:logloglb}}
\label{sec:thm:loglogb}
Recall that Theorem \ref{thm:logloglb} claims
that with probability $1-o(1)$, at least $\frac{1}{\log ((k-1)(r-1))}\log \log n - O(1)$ rounds of peeling
are required before arriving at an empty $k$-core. The analysis of Section \ref{sec:high} established
that, in the idealized setting, each node $v$ remains unpeeled after $t=\frac{1}{\log ((k-1)(r-1))}\log \log n - C_1$
rounds with probability at least $n^{-1/3}$, where where $C_1$ is an appropriately large constant
that depends on $k$ and $r$. 
Hence, in the idealized setting, the expected number of nodes that remain unpeeled after $t$ rounds 
is greater than or equal to $n^{2/3}$.
We use this fact to establish that 
the claimed round lower bound holds in $G_c^r$ with probability $1-o(1)$.

The argument to bound the effects of deviations 
from the idealized process 
is substantially simpler in the context of Theorem \ref{thm:logloglb} 
than in the analogous argument from Section \ref{sec:thm:loglog}.
Indeed, 
to prove Theorem \ref{thm:loglog}, we needed to establish that with probability $1-o(1)$, 
\emph{all} nodes in $G_c^r$ are peeled 
after a suitable number of rounds. The argument of Section \ref{sec:thm:loglog}
accomplished this by establishing that, for any node $v$,
$v$ is peeled after $t$ rounds with probability $1-o(1/n)$, for
an appropriate choice of  $t=\frac{1}{\log ((k-1)(r-1))}\log \log n + O(1)$.
We then
applied a union bound to conclude that this holds for
all nodes with probability $1-o(1)$. It was relatively easy
to establish that $v$ is peeled after $t$ rounds
with probability $1-\text{polylog}(n)/n$, and most of the effort in the proof 
was devoted to increasing this probability to $1-o(1/n)$, large enough to perform a union bound
over all $n$ nodes. 

In contrast, to establish a \emph{lower} bound on the number of rounds required, 
one merely needs to show the existence of a single node
that remains unpeeled after $t=\frac{1}{\log ((k-1)(r-1))}\log \log n - C_1$ rounds. Let $L_{t, \text{ideal}}$
be a random variable denoting the number of nodes that remain unpeeled after $t$ rounds 
in the idealized setting of Section \ref{sec:high},
and let $L_t$ be a random variable denoting the analogous number of nodes 
in $G_c^r$. As previously mentioned,
our analysis in the idealized framework (Section \ref{sec:high})
shows that 
the expected value of $L_{t, \text{ideal}}$ is at least $n^{2/3}$ for a suitably
chosen constant $C_1$ in the expression for $t$.  
Lemma \ref{lemma:tvbound1} then implies that the expected value of $L_t$
is at least $n^{2/3}/\text{polylog}(n)$. 
We now sketch an argument that $L_t$ is concentrated around its expectation, i.e., that with probability $1-o(1)$,
$L_t= E[L_t] \pm n^{1/2} \text{polylog}(n) \geq n^{2/3}/\text{polylog}(n)$.
We note that an entirely analogous argument is used later to prove Theorem \ref{thm:abovethreshold}
in Section \ref{sec:above}, where the argument is given in full detail.

Let $E_1$ denote the event that there are $m=cn \pm O(\sqrt{n \log n})$ edges in $G_c^r$. Let $E_2$
denote the event that all nodes in $G_c^r$ have neighbors of size at most $\log^{c_2}(n)$ for an
appropriate constant $c_2$.
By Lemma \ref{lem:polylog}, events $E_1$ and $E_2$ both occur with probability $1-2/n$.
We will condition on both events occurring for the duration of the argument, absorbing
an additive $2/n$ into the $o(1)$ failure probability in the statement of Theorem \ref{thm:logloglb}
(note that the conditioning causes at most an $O(1)$ change in $E[L_t]$).

We consider the process of exposing the $m$ edges of $G_c^r$ one at a time;  \edit{denote the random edges by $A_1,A_2,\ldots,A_{m}$.}
For our martingale, we consider random variables $L_t^{i} = E[L_t~|~A_1,\ldots,A_i]$, so $L_t^0 = E[L_t]$ and $L_t^{m} = L_t$.  
Conditioned on events $E_1$ and $E_2$ occurring, 
each exposed edge changes the conditional expectation
of $L_t$ by only $\log^{c_2}(n)$,  so Azuma's martingale inequality\footnote{Formally,
to cope with conditioning on events $E_1$ and $E_2$ in the application of Azuma's inequality,
we must actually consider a slightly modified martingale. This technique is standard, and the details
can be found in Section \ref{sec:above}.}
 \cite[Theorem 12.4]{MU} yields for sufficiently large $n$:
\begin{eqnarray*}\Pr(|L_t - E[L_t]| \geq n^{1/2} \log^{c_2 + 1}(n)) & \leq & 2e^{-n \log^{2c_2+2}(n)/\left(2 m \log^{2c_2}(n)\right)}\\
 & \leq & e^{-\log^{3/2}(n)} \leq 1/n.\end{eqnarray*}
 In particular, this means that with probability
$1-o(1)$ there remain unpeeled vertices in $G_c^r$ after $t$ rounds of peeling.}

\section{Above the Threshold}
\label{sec:above}
We now consider the case when $c > c_{k,r}^*$.  We show that parallel peeling
requires $\Omega(\log n)$ rounds in this case.

Molloy \cite{Molloy} showed that in this case
there exists a $\rho > 0$ such that $\lim_{t \to \infty}\rho_t =
\rho$. Similarly, $\lim_{t \to \infty}\beta_t = \beta > 0$ and
$\lim_{t \to \infty}\lambda_t = \lambda > 0$.  It follows that the
core will have size $\lambda n + o(n)$. We examine how 
$\beta_t$ and $\lambda_t$ approach their limiting values to show that the
parallel peeling algorithm takes $\Omega(\log n)$ rounds.

\begin{theorem}
\label{thm:abovethreshold}
Let $r \geq 3$ and $k \geq 2$.
With probability $1-o(1)$, the peeling process for the $k$-core in 
$G_{n, cn}^r$
terminates after $\Omega(\log n)$ rounds when $c > c^*_{k, r}$,
\end{theorem}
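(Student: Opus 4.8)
The plan is to show that the approach to the fixed point $\beta$ of the recursion $\beta_{i+1} = f(\beta_i)$, where $f(x) = \left[1 - e^{-x}\sum_{j=0}^{k-2} x^j/j!\right]^{r-1} rc$, is geometric rather than doubly-exponential, and hence requires $\Omega(\log n)$ rounds to bring the excess $\beta_i - \beta$ down to the $o(1/n)$-scale needed for the core to actually empty out (which it cannot, since the core is non-empty, but the point is that the parallel peeling \emph{of the finite graph} cannot stabilize faster than the idealized recursion converges). First I would show that $f$ is a smooth, increasing function on $[0,\infty)$ and that $\beta$ is an attracting fixed point with $f'(\beta) = L$ for some $L \in [0,1)$; this is exactly the content of Molloy's result that $\rho_t \to \rho > 0$, since convergence of the recursion to a fixed point of a smooth map forces $|f'(\beta)| \le 1$, and one needs $f'(\beta) < 1$ strictly (otherwise there would be a range of $c$ with no stable positive fixed point, contradicting the threshold picture). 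By the mean value theorem, $\beta_i - \beta = f'(\xi_i)(\beta_{i-1}-\beta)$ for some $\xi_i$ between $\beta_{i-1}$ and $\beta$, so once $\beta_i$ is within a small neighborhood of $\beta$ where $f' \le (1+L)/2 =: L' < 1$, we get $|\beta_i - \beta| \ge c_0 (L')^{i}$ is \emph{false} — I need the \emph{lower} bound — so instead I'd argue $|\beta_{i+1} - \beta| \ge L'' |\beta_i - \beta|$ for a constant $L'' > 0$ (using that $f'$ is bounded below by a positive constant near $\beta$, which holds because $f' > 0$ everywhere on the relevant interval). This gives $|\beta_i - \beta| \ge c_0 (L'')^i$, i.e., the deviation shrinks at most geometrically.

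Next I would transfer this to the actual peeling process. Let $S_i$ be the number of vertices unpeeled after $i$ rounds in the real hypergraph. The idealized analysis (made rigorous via the branching-process approximation and Lemma~\ref{lem:polylog}, exactly as in the previous theorem) shows that $\mathbb{E}[S_i] = \lambda_i n + o(n)$, and since $\lambda_i \to \lambda > 0$ with $\lambda_i - \lambda$ also shrinking only geometrically (same MVT argument applied to the map defining $\lambda_i$ in terms of $\beta_{i-1}$, or simply noting $\lambda_i$ is a smooth function of $\beta_i$ with nonzero derivative), we have $\lambda_i - \lambda \ge c_1 (L'')^i$ for all $i$. Therefore for the process to have reached its final (non-empty) core — equivalently for $S_i = S_{i+1}$, the termination condition — we would need $\lambda_i$ to have essentially converged, and I claim the process cannot terminate while $\lambda_i - \lambda$ is still larger than, say, $n^{-1/2}$, because in that regime a nontrivial (super-polylogarithmic) expected number of vertices still get peeled in round $i+1$. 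Concretely: the expected number of vertices peeled in round $i+1$ is $(\lambda_i - \lambda_{i+1})n + o(n) \ge c_2(L'')^i n$, and a second-moment / Azuma-type concentration (the neighborhoods are polylog-sized by Lemma~\ref{lem:polylog}, so changing one edge changes the count by polylog) shows this holds with high probability, so the process is still making progress. Solving $c_2 (L'')^i n \ge 1$ for $i$ gives $i = \Omega(\log n)$ rounds before termination is even possible.

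The main obstacle I anticipate is establishing the \emph{strict} inequality $f'(\beta) < 1$ and, more importantly, a uniform positive \emph{lower} bound on the one-step contraction factor near $\beta$ — i.e., ruling out that $f'$ vanishes or becomes super-small on the approach, which would allow faster-than-geometric convergence. Near $\beta$ this is fine since $f' > 0$ there, but I need the iterates $\beta_i$ to actually land in a neighborhood of $\beta$ on which $f'$ is bounded below; for small $c$ just above the threshold $\beta$ could be small and the behavior of $f$ near $0$ (where $f$ and its derivative both vanish to order $(k-1)(r-1)$) is delicate — however, just above threshold the stable fixed point $\beta$ is bounded away from $0$ by the discontinuous ("first-order") nature of the $k$-core threshold for $k\ge 2, r\ge 3$ (the core jumps to a positive size), so this is not actually an issue, but it needs to be stated carefully. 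A secondary technical point is making the concentration argument for $S_i$ uniform over the $\Omega(\log n)$ rounds via a union bound, which is routine given the polylog neighborhood bound. The cleanest packaging is probably: (i) geometric convergence of the recursion, (ii) $\mathbb{E}[S_i - S_{i+1}] = \Theta((L'')^i)\cdot n$ for $i \le C\log n$, (iii) concentration, (iv) conclude the process is not done before round $\Omega(\log n)$.
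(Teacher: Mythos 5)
Your proposal is correct and follows essentially the same route as the paper: linearize the recursion at the positive fixed point $\beta$ (the paper does this by explicit Taylor expansion, obtaining $\beta_{i+1}-\beta = a(\beta_i-\beta)+O((\beta_i-\beta)^2)$ with an explicit constant $0<a<1$, which is the two-sided version of your mean-value-theorem bound), conclude that $\lambda_i-\lambda$ decays only geometrically, and then transfer to the actual graph via concentration over $t=\gamma\log n$ rounds so that the surviving non-core population is still polynomially large (the paper uses $\Omega(n^{5/6})$ against $o(n^{-1/6})$ error terms, matching your step (iv)). Your explicit attention to the lower bound on the contraction factor and to $\beta$ being bounded away from $0$ just above threshold makes precise exactly what the paper's explicit formula for $a$ delivers.
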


\begin{proof}
First, note that $\beta$ corresponds to the fixed point 
\begin{align}
\label{eq:fixedpoint}
\beta &= \bigg[1 - e^{-\beta}\sum_{j=0}^{k-2}\frac{\beta^j}{j!}\bigg]^{r-1}rc.
\end{align}
Let $\beta_i = \beta + \delta_i$, where $\delta_i > 0$. 
We begin by working in the idealized branching process model given in Section~\ref{sec:high} to determine
the behavior of $\beta_i$. Starting with Equation \eqref{eq:betairecursion} and considering $\beta_{i+1}$ as a
function of $\delta_i$, we obtain:
\begin{align} \label{eq:betaabove}
\beta_{i+1} &= \bigg[1 - e^{-\beta-\delta_i}\sum_{j=0}^{k-2}\frac{(\beta+\delta_i)^j}{j!}\bigg]^{r-1}rc.
\end{align}

We now view the right hand side of Equation \eqref{eq:betaabove}
as a function of $\delta_i$. Denoting this function as $f(\delta_i)$,
we take a Taylor series expansion around 0 and conclude that:
$$f(\delta_i) = f(0) + f'(0) \delta_i + \Theta(f''(0) \delta_i^2).$$

Equation \eqref{eq:fixedpoint} immediately implies that $f(0)=\beta$. 
Moreover, it can be calculated that  
 \begin{equation}
\label{eq:a1} f'(0) = \frac{(r-1)\beta e^{-\beta}}{1-e^{-\beta}\sum_{j=0}^{k-2}\frac{\beta^j}{j!}}\frac{\beta^{k-2}}{(k-2)!}\end{equation} 

In particular, it holds that 
 \begin{equation}
\label{eq:a} 0 < f'(0) < 1. \end{equation} 

Note that while $f'(0) < 1$ can be checked explicitly, this 
condition also follows immediately from the convergence of the $\beta_i$
values to $\beta$.  

The fact that $0 < f'(0)$ is critical in our analysis. Indeed, when $c$ is below
the threshold density $c_{k, r}^*$, $\beta=0$, and hence Equation \eqref{eq:a1} implies that $f'(0)=0$. This is precisely why our analysis
here ``breaks'' when $c < c_{k, r}^*$, and offers an intuitive explanation
for why the number of rounds is $O(\log\log n)$ when $c < c_{k, r}^*$, but is $\Omega(\log n)$ when $c > c_{k, r}^*$.

Since $\beta_{i+1} = \beta + \delta_{i+1}$, $\delta_i$ decreases by a factor of at most $f'(0) + O(\delta_i)$ each iteration.
In particular, for small enough $\delta_i$, $\delta_i$ decreases by a factor of at most $f'(0)+\epsilon_1$ 
for some $\epsilon_1 > 0$ each iteration.

Next, we know that $\lambda = 1-e^{-\beta}\sum_{j=0}^{k-1}\frac{\beta^j}{j!}$. Equations \eqref{eq:lambdairecursion} and \eqref{eq:a}, imply that
\begin{align*}
\lambda_i &= \lambda + \frac{e^{-\beta}\beta^{k-1}}{(k-1)!}\delta_i + O(\delta_i^2).
\end{align*}
Hence, for suitably small (constant) $\delta_i$ values, in each round $\lambda_i$
gets closer to $\lambda$ by at most a constant factor under the idealized model. 
This suggests the $\Omega(\log n)$ bound.  
Specifically, we can choose $t = \gamma \log n$ for a suitably small constant $\gamma$
so that $\delta_t$ in the idealized model remains $\Omega(n^{1-\eta})$ for \edit{a given constant} $\eta < 1$.  This gives that the ``gap''
$\lambda_t - \lambda$ is $\Omega(n^{-\eta})$, leaving an expected $\Omega(n^{1-\eta})$ vertices still to be peeled.
This number is high enough so that we can apply martingale concentration arguments, as deviations from
the expectation can be made to be $o(n^{1-\eta})$ with high probability.  This follows the approach of e.g. \cite{BFU,Molloy}.

To this end, note that it is straightforward to modify the argument of Lemma~\ref{lem:polylog} to show
that for a suitably small constant $c_1 > 0$, with probability $1-O(1/n)$,
for all vertices $v$, the neighborhood of distance $c_1 \log n$ around
$v$ contains at most $n^{c_2}$ vertices for a suitable
constant $c_2>0$.  For suitable constants $c_1, c_2$, we refer to this event as $E_3$,
and we condition on $E_3$ occurring for the duration of the proof.

As before, there are deviations from the
idealized branching process, and we bound the effects
of these deviations as follows. If we let $Z_u$ be the number of already
expanded vertices in the breadth first search when expanding a vertex
$u$'s neighborhood up to distance $c_1 \log n$, we have $Z_u \leq
n^{c_2}$, so as we expand a neighborhood the probability of any
collision is at most $n^{2c_2-1}$.  Since we are proving a lower bound
on the number of rounds required, we can pessimistically assume that such vertices
(i.e., vertices $u$ such that the BFS rooted at $u$ results in a collision) will be peeled immediately -- 
this will not affect our conclusion that $\Omega(n^{1-\eta})$ vertices remain to be peeled, as we may choose $c_2$ so that 
$n^{2c_2} = o(n^{1-\eta})$.  Now we apply Azuma's martingale inequality
\cite[Theorem 12.4]{MU},
exposing the $cn$ edges in the graph one at a time;  \edit{denote the random edges by $A_1,A_2,\ldots,A_{cn}$}.  We consider $t= c_1 \log n$
 rounds for a $c_1$ that leaves a gap of \edit{$\Omega(n^{\eta})$} for some small $\eta>0$
{(i.e., guarantees that $\lambda_t - \lambda > n^{-\eta}$}; $\eta = 0.01$ suffices), and let $X_t$ be the number of vertices that survive that many rounds 
with no duplicates in their neighborhood of depth $c_1 \log n$.  Then 
$E[X_t] -\lambda n$ is $\Omega(n^{1-\eta})$.  
\edit{

For our martingale, we consider random variables $X_t^{i} = E[X_t~|~A_1,\ldots,A_i]$, so $X_t^0 = E[X_t]$ and $X_t^{cn} = X_t$.  
To cope with the conditioining on $E_3$, we consider the ancillary random variable $Y_t^i$ where $Y_t^i = X_t$ as long 
there is no neighborhood of distance $c_1 \log n$ around any vertex $v$ that
contains at most $n^{c_2}$ vertices among the $i$ currently revealed edges and $Y_t^i = Y_t^{i-1}$ otherwise, for our
suitably chosen constant $c_2$.  Note $Y_t^0= E[X_t]+O(1)$, and $\Pr(Y_t^{cn} \neq X_t^{cn})$ corresponds to the event $E_3$.\footnote{This
method of dealing with conditioning while applying Azuma's martingale inequality is well known;  see for example \cite{Fan}.}
Each exposed edge changes the conditional expectation
of $Y_t$ by only $O(n^{c_2})$ vertices, so Azuma's martinagle inequality yields:
$$\Pr(|Y_t - Y_t^0| \geq n^{2/3}) \leq 2e^{-n^{4/3}/(cn\cdot n^{2c_2})} \leq e^{-n^{1/6}}$$
for $c_2$ chosen suitably small.  This implies
$$\Pr(|X_t - E[X_t]| \geq n^{2/3}+O(1)) \leq e^{-n^{1/6}} + \Pr(E_3).$$
\noindent Hence with probability $1-o(1)$ there remain vertices to be peeled after 
$\Omega(\log n)$ rounds.
}
\end{proof}

\noindent {\em Remark:} 
As discussed in the introduction, the lower bound of Theorem \ref{thm:abovethreshold} matches an $O(\log n)$ upper bound of Achiloptas and Molloy \cite{achlio}.



\section{Simulation Results}
\label{sec:experiment}
We implemented a simulation of the parallel peeling algorithm using the $G_{n, cn}^r$ model, in order to determine how well our 
 theoretical analysis matches the empirical evolution of 
 the peeling process.  Our results demonstrate that
 the theoretical analysis matches the empirical evolution remarkably well.
 
 \begin{table*}[t]
\small
\centering
\begin{tabular}{|l|r|r|r|r|r|r|r|r|}
\hline
& \multicolumn{2}{|c|}{$c = 0.7$} & \multicolumn{2}{|c|}{$c = 0.75$} & \multicolumn{2}{|c|}{$c = 0.8$} & \multicolumn{2}{|c|}{$c = 0.85$}\\
\hline
$n$ & Failed & Rounds & Failed & Rounds & Failed & Rounds & Failed & Rounds \\
\hline
10000 & 0 & 12.504 & 0 & 23.352 & 1000 & 17.037 & 1000 & 10.773 \\
20000 & 0 & 12.594 & 0 & 23.433 & 1000 & 19.028 & 1000 & 11.928 \\
40000 & 0 & 12.791 & 0 & 23.343 & 1000 & 20.961 & 1000 & 12.992 \\
80000 & 0 & 12.939 & 0 & 23.372 & 1000 & 22.959 & 1000 & 14.104 \\
160000 & 0 & 12.983 & 0 & 23.421 & 1000 & 25.066 & 1000 & 15.005 \\
320000 & 0 & 13.000 & 0 & 23.491 & 1000 & 27.089 & 1000 & 16.305 \\
640000 & 0 & 13.000 & 0 & 23.564 & 1000 & 29.281 & 1000 & 17.334 \\
1280000 & 0 & 13.000 & 0 & 23.716 & 1000 & 31.037 & 1000 & 18.499 \\
2560000 & 0 & 13.000 & 0 & 23.840 & 1000 & 33.172 & 1000 & 19.570 \\
\hline
\end{tabular} \\
\caption{Results from simulations of the parallel peeling process using $r = 4$ and $k=2$, averaged over $1000$ trials.}
\label{table:simulation1}
\end{table*}

To check the growth of the number of rounds as a function of $n$, we ran the program $1000$ times for $r=4, k=2$ and various values of $n$ and $c$, and computed the average number of rounds for the peeling process to complete. For reference, $c_{2,4}^* \approx 0.772$. Table \ref{table:simulation1} shows the results.



For all the experiments, when $c < c_{2,4}^*$, all $1000$ trials
succeeded (empty $k$-core) and when $c > c_{2,4}^*$, all $1000$ trials
failed (non-empty $k$-core).  
For $c < c_{2,4}^*$, the
average number of rounds increases very slowly with $n$, while for $c
> c_{2,4}^*$, the average increases approximately linearly in $\log n$. This
is in accord with our $O(\log \log n)$ result below the threshold and
$\Omega(\log n)$ result above the threshold. The results for other
values of $r$ and $k$ were similar.

\begin{table*}[t]
\small
\centering
\begin{tabular} {|l|r|r|}
\hline
\multicolumn{3}{|c|}{$c = 0.7$} \\
\hline
$t$ & Prediction & Experiment \\
\hline
1 & 768922 & 768925 \\
2 & 673647 & 673664 \\
3 & 608076 & 608097 \\
4 & 553064 & 553091 \\
5 & 500466 & 500503 \\
6 & 444828 & 444872 \\
7 & 380873 & 380930 \\
8 & 302531 & 302607 \\
9 & 204442 & 204550 \\
10 & 93245 & 93398 \\
11 & 14159 & 14269 \\
12 & 74 & 78 \\
13 & 0.00001 & 0 \\
14 & 0 & 0 \\
15 & 0 & 0 \\
16 & 0 & 0 \\
17 & 0 & 0 \\
18 & 0 & 0 \\
19 & 0 & 0 \\
20 & 0 & 0 \\
\hline
\end{tabular}
\quad
\begin{tabular} {|l|r|r|}
\hline
\multicolumn{3}{|c|}{$c = 0.85$} \\
\hline
$t$ & Prediction & Experiment \\
\hline
1 & 853158 & 853172 \\
2 & 811184 & 811200 \\
3 & 793026 & 793042 \\
4 & 784269 & 784281 \\
5 & 779841 & 779851 \\
6 & 777550 & 777559 \\
7 & 776350 & 776359 \\
8 & 775719 & 775728 \\
9 & 775385 & 775394 \\
10 & 775209 & 775218 \\
11 & 775115 & 775124 \\
12 & 775066 & 775074 \\
13 & 775039 & 775048 \\
14 & 775025 & 775034 \\
15 & 775018 & 775026 \\
16 & 775014 & 775022 \\
17 & 775012 & 775020 \\
18 & 775011 & 775019 \\
19 & 775010 & 775018 \\
20 & 775010 & 775018 \\
\hline
\end{tabular}
\caption{Simulation results evaluating how well Equation \eqref{eq:lambdai} approximates the number of vertices left after $t$ rounds. The experiments are run using $r=4, k=2, n = 1$ million, averaged over $1000$ trials.}
\label{table:simulation2}
\end{table*}


We also tested how well the idealized values from
the recurrence for $\lambda_t$ (Equation \eqref{eq:lambdai}) approximate the fraction of vertices
left after $t$ rounds. Table \ref{table:simulation2} shows that the recurrence indeed describes the
behavior of the peeling process remarkably well, both below and above the
threshold. In these simulations, we used $r = 4, k = 2$ and
$n = 1$ million.  For each value of $c$, we averaged over $1000$ trials.

\section{GPU Implementation}
\label{sec:GPU}
\noindent \textbf{Motivation.} Using a graphics processing unit (GPU), we developed a parallel implementation for Invertible Bloom Lookup Tables (IBLTs), a data structure recently proposed by Goodrich and Mitzenmacher \cite{GM}. Two motivating applications are sparse recovery \cite{GM} and efficiently encodable and decodable error correcting codes \cite{MV}.  For brevity we describe here only the sparse recovery application.

In the sparse recovery problem, $N$ items are inserted into a set $S$,
and subsequently all but $n$ of the items are deleted. The goal is to
recover the exact set $S$, using space proportional to the final
number of items $n$, which can be much smaller than the total number
of items $N$ that were ever inserted. IBLTs achieve this roughly as
follows. The IBLT maintains $O(n)$ cells, where each cell contains
a key field and a checksum field. We use $r$ hash functions
$h_1, \dots, h_r$. When an item $x$ is inserted or deleted from $S$,
we consider the $r$ cells $h_1(x) \dots h_r(x)$, and we XOR the key
field of each of these cells with $x$, and we XOR the checksum field
of each of these cells with $\text{checkSum}(x)$, where
$\text{checkSum}$ is some simple pseudorandom function. Notice that
the insertion and deletion procedures are identical.

In order to recover the set $S$, we iteratively look for ``pure'' cells -- these are cells that only contain one item $x$ in the final set $S$.  Every time we find a pure cell whose key field is $x$, we recover $x$ and delete $x$ from $S$, which hopefully creates new pure cells. We continue until there are no more pure cells, or we have fully recovered the set $S$. 

The IBLT defines a random $r$-uniform hypergraph $G$, in which vertices correspond to cells in the IBLT, and edges correspond to items in the set $S$. Pure cells in the IBLT correspond to vertices of degree less than $k=2$. The IBLT recovery procedure precisely corresponds to 
a peeling process on $G$, and the recovery procedure is
successful if and only if the 2-core of $G$ is empty.  

We note that this example application is similar to other applications of peeling algorithms.  For example, in the setting of erasure-correcting
codes \cite{LMSS}, encoded symbols correspond to an XOR of some number of original message symbols.  This naturally defines a hypergraph in which vertices correspond to encoded symbols,
edges correspond to unrecovered original message symbols, and a vertex can recover a message symbol when its degree is 1.  
Decoding of this erasure-correcting code corresponds to peeling 
on the associated hypergraph (after deleting all vertices corresponding to erased codeword symbols), and full recovery of the message occurs when the 2-core is empty. Our analysis directly applies
to the setting where each message symbol randomly chooses to 
contribute to a fixed number $r$ of encoded symbols.    

\medskip
\noindent \textbf{Implementation Details.}
\label{sec:impelementationdetails}
Our parallel IBLT implementation consists of two stages: the insertion/deletion stage, during which items are inserted and deleted from the IBLT, and the recovery phase.
Both phases can be parallelized. 

One method of parallelizing the insertion/deletion phase is as follows: we devote a separate thread to each item to be inserted or deleted. A caveat is that multiple threads may try to modify a single cell at any point in time,
and so we have to use atomic XOR operations, to ensure that threads trying to write to the same cell do not interfere with each other. In general, atomic operations can be a bottleneck in any parallel implementation; if $t$ threads try to write to the same memory location, the algorithm will take at least $t$ (serial) time steps. Nonetheless, our experiments showed this parallelization technique to be effective.

We parallelize the recovery phase as follows. We proceed in rounds, and in each round we devote a single thread to each cell in the IBLT. 
Each thread checks if its cell is pure, and if so it identifies the item contained in the cell, removes all $r$ occurrences of the item from the IBLT, and marks the cell as recovered. 
The implementation proceeds until it reaches an iteration where no items are recovered -- this can be checked by summing up (in parallel) the number of cells marked recovered after each round, and stopping
when this number does not change.
This procedure also requires atomic XOR operations, as two threads may simultaneously try to write to the same cell if there are two or more items $x \neq y$ recovered in the same round
such that $h_i(x) = h_i(y)$ for some $1 \leq i \leq r$. 

In addition, we must take care to avoid deleting an item multiple times from the IBLT.
Indeed, since any item $x$ inserted into the IBLT is placed into $r$ cells, $x$ might be contained in multiple pure cells at any instant, and the thread devoted to each such pure cell may try to delete $x$.  This issue is not specific to the IBLT application: 
 \emph{any}
 implementation of the parallel peeling algorithm on a hypergraph,
 regardless of the application domain, must avoid peeling the same edge from the hypergraph
 multiple times. 
 
To prevent this, we split the IBLT up into $r$ subtables, and hash each item into one cell in each subtable upon insertion and deletion. When we execute the recovery algorithm, we iterate through the subtables serially (which requires $r$ serial steps per round), processing each subtable in parallel. This ensures that an item $x$ only gets removed from the table once, since the first time a pure cell is found containing $x$, $x$ gets removed from all the other subtables.

This recovery procedure corresponds to an interesting and fundamental variant of the peeling process we analyze formally in 
Appendix \ref{sec:peelsubtable}.  
In particular, one might initially expect that the number of (parallel)
time steps required by our recovery procedure may be $r$
times larger than the peeling process analyzed in Section \ref{sec:belowthresh}, since our IBLT implementation requires $r$ serial steps to iterate through all $r$ subtables. However, we prove
that the total number of parallel steps
required by our IBLT implementation is roughly a factor of $\log_2(r-1)$
larger than the $\frac{1}{\log((k-1)(r-1))} \log\log n + O(1)$ bound
proved for the peeling process of Section \ref{sec:belowthresh}. 
This ensures that, in practice, the need to iterate serially through subtables does not create a significant serial bottleneck.
Our analysis is connected in spirit to V\"{o}cking's work on asymmetric
load balancing \cite{vocking}, and we provide detailed discussion on the comparison between Theorems \ref{thm:loglog} and \ref{thm:informal} in 
Appendix \ref{sec:peelsubtable}.

\begin{theorem}(Informal) \label{thm:informal}
Let $r\geq 3$, and $\phi_{r-1} = \lim_{k \rightarrow \infty} F_{r-1}^{1/k}(k)$
be the growth rate for the Fibonacci sequence of order $r-1$.
For $c < c_{k, r}^*$,
peeling with sub-tables on $G^r_{n, cn}$ terminates after $\frac{r}{r \log \phi_{r-1} + \log(k-1)} + O(1)$ sub-rounds. \end{theorem}

\edit{We remark that while Theorem \ref{thm:loglog} holds for $r=2$, $k \geq 3$,  Theorem \ref{thm:informal}
holds only for  $r\geq 3$.}

\medskip
\noindent \textbf{Experimental Results.}
All of our serial code was written in C++ and all experiments were compiled with g++ using the -O3 compiler optimization flag and run on a workstation with a 64-bit Intel Xeon architecture and 48 GBs of RAM. We implemented all of our GPU code in CUDA with all compiler optimizations turned on, and ran our GPU implementation on an NVIDIA Tesla C2070 GPU with 6 GBs of device memory.

\medskip \noindent \textbf{Summary of results.} Relative to our serial implementation, our GPU implementation achieves 10x-12x speedups for the\\ insertion/deletion phase, and 20x speedups for the 
recovery stage when the edge density of the hypergraph is below the threshold for successful recovery (i.e. empty 2-core). When the edge density is slightly above the threshold for successful recovery, our parallel
recovery implementation was only about 7x faster than our serial implementation. The reasons for this are two-fold. Firstly, above the threshold, many more rounds of the parallel peeling process were necessary before the 2-core was found. Secondly, above the threshold, less work was required of the serial implementation because fewer items were recovered;  in contrast, the parallel implementation examines every cell in every round. 

Our detailed experimental results are given in Tables \ref{table:justinexpts} (for the case of $r=3$ hash functions) and \ref{table:justinexpts2} (for the case of $r=4$ hash functions). 
The timing results are averages over 10 trials each. For the GPU implementation,
the reported times do count for the time to transfer data (i.e. the items to be inserted) from the CPU to the GPU. 

The reported results are for a fixed IBLT size, consisting of $2^{24}$ cells. These results are representative for all sufficiently large input sizes: once the number of IBLT cells is larger than about $2^{19}$,
the runtime of our parallel implementation grows roughly linearly with the number of table cells (for any fixed table load). Here, table load refers to the ratio of the number of items in the IBLT to the number of cells in the IBLT. This corresponds to the edge density $c$ in the corresponding hypergraph.
 The linear increase in runtime above a certain input size is typical, and is due to the fact that there is a finite number of threads that the GPU can launch
at any one time. 

\begin{table*}[t]
\footnotesize
\centering
\begin{tabular} {|c|c|c|c|c|c|c|}
\hline
Table & No. Table &  \% & GPU & Serial & GPU & Serial \\
Load &	 Cells &  Recovered & Recovery Time & Recovery Time & Insert Time & Insert Time\\
\hline
0.75 & 16.8 million & 100\% & 0.33 s & 6.37 s & 0.31 s & 3.91 s\\
\hline
0.83 & 16.8 million & 50.1\% & 0.42 s & 3.64 s & 0.35 s & 4.34 s\\
\hline
\end{tabular}
\caption{Results of our parallel and serial IBLT implementations with $r=3$ hash functions. The table load refers to the ratio of the number of items in the IBLT to the number of cells in the IBLT. }
\label{table:justinexpts}
\end{table*}

\begin{table*}[t]
\footnotesize
\centering
\begin{tabular} {|c|c|c|c|c|c|c|}
\hline
Table & No. Table & \% & GPU & Serial & GPU & Serial \\
 Load &	 Cells &  Recovered & Recovery Time & Recovery Time & Insert Time & Insert Time\\
\hline
0.75 & 16.8 million & 100\% & 0.47 s & 8.37 s & 0.42 s& 4.55 s\\
\hline
0.83 & 16.8 million & 24.6\% & 0.25 s & 2.28 s & 0.46 s & 5.0 s\\
\hline
\end{tabular}
\caption{Results of our parallel and serial IBLT implementations with $r=4$ hash functions. The table load refers to the ratio of the number of items in the IBLT to the number of cells in the IBLT. }
\label{table:justinexpts2}
\end{table*}

\section{Rounds as a Function of the Distance from the Threshold}
\label{sec:distance}
Recall that the hidden constant in the $O(1)$ term of Theorem \ref{thm:loglog} depends on the size of the ``gap'' $\nu=c_{k,r}^*-c$ between the edge density and the threshold density. \edit{This 
term can be significant in practice when $\nu$ is small, and 
in this section, we make the dependence on $\nu$ explicit.} Specifically, we extend the analysis of Section \ref{sec:belowthresh} 
to characterize how the growth of the number of rounds depends on  $c_{k,r}^*-c$, when $c$ is a constant with $c < c_{k, r}^*$. 
The proof of Theorem \ref{thm:distance} below is in Appendix \ref{app:distance}.

\begin{theorem} \label{thm:distance} Let $\nu = |c_{k, r}^*-c|$ for constant $c$ with $c < c_{k,r}$. With probability $1-o(1)$, peeling in $G_{n, cn}^r$ requires $\Theta(\sqrt{1/\nu}) + \frac{1}{\log((k-1)(r-1))}\log \log n$ rounds when $c$ is below the threshold density $c_{k, r}^*$. 
\end{theorem}


\section{Conclusion}
\label{sec:conclusion}
In this paper, we analyzed parallel versions of the peeling process on random hypergraphs. We showed that when the number of edges is below the threshold edge density for the $k$-core to be empty, with high probability the parallel algorithm takes $O(\log \log n)$ rounds to peel the $k$-core to empty. In contrast, when the number of edges is above the threshold, with high probability it takes $\Omega(\log n)$ rounds for the algorithm to terminate with a non-empty $k$-core. We also considered some of the details of implementation and proposed a variant of the parallel algorithm that avoids a fundamental implementation issue;  specifically, by using subtables, we avoid peeling the same element multiple times. We show this variant converges 
significantly faster than might be expected, thereby avoiding a sequential bottleneck. Our experiments confirm our theoretical results and show that in practice, peeling in parallel provides a considerable increase in efficiency over the serialized version.

\appendix

\section{Le Cam's Theorem}
\label{app:lecam}
Le Cam's Theorem can be stated as follows.
\begin{theorem}
Let $X_1,X_2,\ldots,X_n$ be independent 0-1 random variables with $\Pr(X_i = 1) = p_i$.
Let $\lambda = \sum_{i=1}^n p_i$ and $S= \sum_{i=1}^n X_i$.  Then 
$$\sum_{k=0}^\infty |\Pr(S = k) - e^{-\lambda} \lambda^k / k!| < 2 \sum_{i=1}^n p_i^2.$$
\end{theorem}
In particular, when $p_i = \lambda/n$ for all $i$, we obtain that the binomial distribution
converges to the Poisson distribution, with total variation distance bounded by $\lambda^2/n$.  


\section{Parallel Peeling with Subtables}
\label{sec:peelsubtable}
The parallel peeling process used in our GPU implementation of IBLTs in Section \ref{sec:GPU} does not precisely correspond 
to the one analyzed in Sections \ref{sec:compl} and \ref{sec:above}. The differences are two-fold. First, the underlying hypergraph $G$
in our IBLT implementation is not chosen uniformly from all $r$-uniform hypergraphs; instead, vertices in $G$ (i.e., IBLT cells) are partitioned into $r$
equal-sized sets (or subtables) of size $n/r$, and edges are chosen at random subject to the constraint that each edge contains exactly one vertex from each set. Second, 
the peeling process in our GPU implementation does not attempt to peel all vertices in each round. 
Instead, our GPU implementation proceeds in \emph{subrounds}, where each round consists of $r$ subrounds. In the $i$th subround of a given 
round, we remove all
the vertices of degree less than $k$ in the $i$th subtable. Note that running
one round of this algorithm is not equivalent to running one round of the
original parallel peeling algorithm. This is because peeling the first
subtable may free up new peelable vertices in the second subtable, and
so on. Hence, running one round of the algorithm used in our GPU implementation may remove more
vertices than running one round of the original algorithm.


In this section, we analyze the peeling process used in our GPU implementation.
We can use a similar approach as above to obtain the recursion for the
survival probabilities for this algorithm. Let $\rho_{i,j}$ be the
probability that a vertex in the tree survives $i$ rounds when it's in
the $j$th subtable, with each $\rho_{0, j} = 1$. Then,
\begin{align*}
\rho_{i,j} &= \Pr\bigg(\text{Poisson}\Big(rc \prod_{h < j}\rho_{i,h}\prod_{h > j}\rho_{i-1,h}\Big) \geq k-1\bigg).
\end{align*}
By the same reasoning,
\begin{align}
\label{eq:newlambdaij}
\lambda_{i,j} &= \Pr\bigg(\text{Poisson}\Big(rc \prod_{h < j}\rho_{i,h}\prod_{h > j}\rho_{i-1,h}\Big) \geq k \bigg)
\end{align}
where $\lambda_{0,j}=1$ for all $j$. Also, we can consider 
\begin{align*}
\beta_{i,j} &= rc \bigg(\prod_{h < j}\rho_{i,h}\bigg) \bigg(\prod_{h > j}\rho_{i-1,h}\bigg).
\end{align*}
These equations differ from our original equation in a way similar to how
the equations for standard multiple-choice load-balancing differ from 
V{\"o}cking's asymmetric variation of multiple-choice load-balancing, where
a hash table is similarly split into $r$ subtables, each item is given one
choice by hashing in each subtable, and the item is placed in the least loaded subtable,
breaking ties according to some fixed ordering of the subtables \cite{MVo,vocking}.  

Motivated by this, we can show that in this variation, below the threshold, these values eventually decrease
``Fibonacci exponentially'', that is, with the exponent falling according to a generalized
Fibonacci sequence.  
We follow the same approach as outlined in Section~\ref{sec:high}.
Let $\beta'_m = \beta_{i,j}$ where $m = (i-1)r+j$, and similarly for 
$\lambda'_m$ and $\rho'_m$, so we may work in a single dimension. 
Let $F_{r-1}(i)$ represent the $i$th number in a Fibonacci sequence of order $r-1$.  Here, a Fibonacci sequence of order $r$ is defined such that the first $r-1$ elements in the sequence equal one, and for $i > r-1$, the $i$th 
element is defined to be the sum of the preceding $r-1$ terms.

We choose a constant $I$ so that $\beta'_{I+a} \leq \phi^{F_{r-1}(a)}$ for an appropriate constant $\phi < 1$ and 
$0 \leq a \leq r-1$.  
We inductively show that
$$\beta'_{I+t} \leq \phi^{(k-1)^{\lfloor t/r \rfloor} F_{r-1}(t)}$$ when 
$\frac{rc}{[(k-1)!]^{r-1}} < 1$;  as in Section \ref{sec:belowthresh}, the proof can be modified easily if $\frac{rc}{[(k-1)!]^{r-1}} > 1$
by simply choosing a different (constant) starting point $I$ for the induction.  
In this case, for $t \geq r$
\begin{align}
\notag
\beta'_{I+t} &\leq \bigg[\prod_{I+t-r < j < I+t} \frac{(\beta'_j)^{k-1}}{(k-1)!}\bigg] rc \\
\notag
 &\leq \frac{rc}{[(k-1)!]^{r-1}} \prod_{I+t-r < j < I+t} {(\beta'_j)^{k-1}}   \\
 \notag
 &\leq  \frac{rc}{[(k-1)!]^{r-1}}  \prod_{I+t-r < j < I+t} \left ( {\phi^{F_{r-1}(j)(k-1)^{\lfloor (t-r)/r \rfloor}}}\right )^{(k-1)} \\
 \label{ineq:final}
 &\leq \phi^{(k-1)^{\lfloor t/r \rfloor} F_{r-1}(t)}.
\end{align}

Thus, our induction yields that the exponent of $\phi$ in the $\beta'_{m}$ values falls according to a generalized
Fibonacci sequence of order $r-1$, leading to an asymptotic constant factor reduction in the number of overall rounds,
even as we have to work over a larger number of subrounds.  
Inequality \eqref{ineq:final} applies to the idealized branching process,
but we can handle deviations between the idealized process and the actual process essentially as in Theorem \ref{thm:loglog}.
This yields the following variation of
Theorem \ref{thm:loglog} for the setting of peeling with sub-tables.

\begin{theorem}
\label{thm:fibonacci}
Let $r \geq 3$ and $k \geq 2$. Let $\phi_{r-1} = \lim_{k \rightarrow \infty} F_{r-1}^{1/k}(k)$ be the asymptotic growth rate for the Fibonacci sequence
of order $r-1$. Let $G$ be a hypergraph over $n$ nodes with $cn$ edges generated according to the following random process.
The vertices of $G$ are partitioned into $r$ subsets of equal size, and the edges are generated at random subject to the constraint that each edge contains exactly one vertex from each set. 

 With probability $1-o(1)$, the peeling process for the $k$-core in $G$
that uses $r$ subrounds in each round 
terminates after $\frac{1}{r \log \phi_{r-1} + \log (k-1)}\log \log n + O(1)$ rounds when $c < c^*_{k, r}$.
\end{theorem}


It is worth performing a careful comparison of Theorems \ref{thm:loglog} and 
\ref{thm:fibonacci}. For simplicity, we will restrict the discussion to $k=2$.
This corresponds to the case where we are interested in the 2-core of the hypergraph, as in our IBLT implementation.
Theorem \ref{thm:loglog} guarantees that 
the peeling process of Section \ref{sec:belowthresh} requires $\frac{1}{\log(r-1)} \log \log n + O(1)$. 
Meanwhile, Theorem \ref{thm:fibonacci} guarantees
that the total number of sub-rounds required by our IBLT implementation is $r \cdot \frac{1}{r \log \phi_{r-1}}\log \log n + O(1) = \frac{1}{\log \phi_{r-1}} \log \log n + O(1)$. Thus, parallel peeling with subtables
takes a factor $\log(r-1)/\log(\phi_{r-1})$ more (sub)-rounds
than parallel peeling without subtables. 

For $r=3$, $\phi_{r-1} \approx 1.61$ is the golden ratio, and in
this case $\log(r-1)/\log(\phi_{r-1}) \approx 1.456$. Thus, for $r=3$ and $k=2$, parallel peeling with sub-tables takes a factor of less than 1.5 times more (sub)-rounds than parallel peeling. In contrast,
one might a priori have expected that the number of 
sub-rounds for peeling with sub-tables would be a factor $r=3$ larger than in the standard peeling process, since $r$ serial steps are required to iterate through all $r$ subtables.

As $r$ grows,
$\phi_{r-1}$ rapidly approaches 2 from below. For example, for $r=4$ this quantity is approximately 1.83 and for $r=5$ it is approximately 1.92 \cite{vocking}.
It follows that for large $r$  the ratio $\log(r-1)/\log(\phi_{r-1})$ is 
very close to $\log_2(r-1)$. 

\subsection*{Simulations with Subtables}
\label{sec:experimentsubtable}

We ran simulations for the parallel peeling algorithm with subtables in a similar way as the simulations in 
Section \ref{sec:experiment}. Table \ref{table:simulationsubtable1} shows the results for the average number
of subrounds. The number of subrounds is at most $r$ times the number of rounds in the original 
parallel peeling algorithm,  but our analysis of Section~\ref{sec:peelsubtable} suggests the number of 
subrounds should be  significantly smaller.  In this case, comparing Table \ref{table:simulationsubtable1} with Table \ref{table:simulation1}, this factor is about 2.

\begin{table}
\small
\centering
\begin{tabular}{|l|r|r|r|r|}
\hline
& \multicolumn{2}{|c|}{$c = 0.7$} & \multicolumn{2}{|c|}{$c = 0.75$} \\
\hline
$n$ & Failed & Subrounds & Failed & Subrounds\\
\hline
10000 & 0 & 26.018 & 0 & 47.732 \\
20000 & 0 & 26.142 & 0 & 47.659 \\
40000 & 0 & 26.273 & 0 & 47.666 \\
80000 & 0 & 26.452 & 0 & 47.783 \\
160000 & 0 & 26.585 & 0 & 47.769 \\
320000 & 0 & 26.790 & 0 & 47.925 \\
640000 & 0 & 26.957 & 0 & 48.070 \\
1280000 & 0 & 27.006 & 0 & 48.141 \\
2560000 & 0 & 27.012 & 0 & 48.175  \\
\hline
\end{tabular}
\caption{Results of simulations of peeling with subtables using $r = 4$ and $k=2$, over $1000$ trials.}
\label{table:simulationsubtable1}
\end{table}

\begin{table}
\footnotesize
\centering
\begin{tabular} {|l|r|r|r|}
\hline
\multicolumn{4}{|c|}{$c = 0.7$} \\
\hline
$i$ & $j$ & Prediction & Experiment \\
\hline
1 & 1 & 942230 & 942230 \\
1 & 2 & 876807 & 876803 \\
1 & 3 & 801855 & 801855 \\
1 & 4 & 714875 & 714878 \\
2 & 1 & 678767 & 678771 \\
2 & 2 & 643070 & 643080 \\
2 & 3 & 609686 & 609697 \\
2 & 4 & 581912 & 581919 \\
3 & 1 & 554402 & 554414 \\
3 & 2 & 527335 & 527341 \\
3 & 3 & 500469 & 500476 \\
3 & 4 & 472470 & 472475 \\
4 & 1 & 442874 & 442871 \\
4 & 2 & 410958 & 410956 \\
4 & 3 & 375770 & 375764 \\
4 & 4 & 336458 & 336447 \\
5 & 1 & 292159 & 292144 \\
5 & 2 & 242396 & 242374 \\
5 & 3 & 187891 & 187866 \\
5 & 4 & 131789 & 131776 \\
6 & 1 & 80372 & 80376 \\
6 & 2 & 40582 & 40600 \\
6 & 3 & 15481 & 15503 \\
6 & 4 & 3649 & 3666 \\
7 & 1 & 348 & 354 \\
7 & 2 & 6 & 6 \\
7 & 3 & 0.003 & 0.008 \\
7 & 4 & 0 & 0 \\
\hline
\end{tabular}
\caption{\small Results of simulations of peeling with subtables showing how well the recursion for $\lambda'_{i,j}$ approximates the number of vertices left after $t$ rounds. The experiments are run using $r=4, k=2, n = 1$ million, averaged over $1000$ trials.}
\label{table:simulationsubtable2}
\end{table}

We also performed  simulations to determine how closely the recursion given in Equation~\eqref{eq:newlambdaij} predicts the number of vertices left after peeling the $j$th subtable in the $i$th round. 
Denote by $\lambda'_{i,j}$ the expected fraction of vertices left in the $(i, j)$'th subround. Then $\lambda'_{i,j}$ is given by the following formula:
\begin{align*}
\lambda'_{i,j} = \frac{1}{r}\bigg(\sum_{h \leq j}\lambda_{i,h} + \sum_{h > j}\lambda_{i-1,h}\bigg),
\end{align*}
where the $\lambda_{i,j}$ values are given by Equation \eqref{eq:newlambdaij}. The results are presented in Table \ref{table:simulationsubtable2}, where the prediction column reports the values of $\lambda'_{i,j}n$. As can be seen, the prediction closely matches the number of vertices left in the simulation.

\section{Proof of Theorem \ref{thm:distance}}
\label{app:distance}
We recall the statement of Theorem \ref{thm:distance}, before
offering a proof.
\medskip

\noindent \textbf{Theorem \ref{thm:distance}.} \emph{ Let $\nu = |c_{k, r}^*-c|$ for constant $c$ with $c < c_{k,r}$. With probability $1-o(1)$, peeling in $G_{n, cn}^r$ requires $\Theta(\sqrt{1/\nu}) + \frac{1}{\log((k-1)(r-1))}\log \log n$ rounds when $c$ is below the threshold density $c_{k, r}^*$. }

\medskip

Since $k$ and $r$ are constants, for notational convenience, we use $c^*$ in place of $c_{k,r}^*$ where the meaning is clear.
Recall that we are working in the setting where $\nu = c^*-c > 0$. Recall Equation \eqref{eq:cstar} for $c^*$ and let $x^*$ be the value of $x$ that satisfies $c^* = \frac{x}{r(1-e^{-x}\sum_{j=0}^{k-2}\frac{x^j}{j!})^{r-1}}$. 
Intuitively, one may think of $x^*$ as the expected number of surviving
descendant edges of each node in the graph 
when the edge density $c$ is precisely equal to the threshold density
$c^*$.

The heart of our analysis lies in proving the following lemma.
\begin{lemma}
\label{bigasslemma}
Let $\tau < x^*$ be any constant. It takes $\Theta(\sqrt{1/\nu})$ rounds before $\beta_i < \tau$.
\end{lemma}
\begin{proof}
Recall Equation \eqref{eq:betairecursion};  setting $\delta_i = \beta_i - x^*$ gives
\begin{align}
\label{eq:figureequation}
\beta_{i+1} &=\!\! \bigg[1 - e^{-\beta_i}\sum_{j=0}^{k-2}\frac{{\beta_i}^j}{j!}\bigg]^{r-1}rc \\
\notag
&=\!\! \bigg[1 - e^{-x^*-\delta_i}\sum_{j=0}^{k-2}\frac{(x^*+\delta_i)^j}{j!}\bigg]^{r-1}rc^*\!\!-\!\!\bigg[1 - e^{-x^*-\delta_i}\sum_{j=0}^{k-2}\frac{(x^*+\delta_i)^j}{j!}\bigg]^{r-1}r\nu \\
\notag
&=\!\! f(\delta_i) - g(\delta_i)\nu,
\end{align}

where $$f(\delta_i) = (1 - e^{-x^*-\delta_i}S(k-2,x^*+\delta_i))^{r-1}rc^*$$
and 

$$g(\delta_i) = \bigg[1 - e^{-x^*-\delta_i}\sum_{j=0}^{k-2}\frac{(x^*+\delta_i)^j}{j!}\bigg]^{r-1}r.$$
 
Then, using the Taylor series expansion for $f(\delta_i)$ around 0,
\begin{align}
\label{eq:deltaiappeq}
f(\delta_i) &= f(0) + f'(0)\delta_i + \frac{f''(0)}{2}\delta_i^2 + O(\delta_i^3) 
\end{align}

We claim that the right hand side of Equation \eqref{eq:deltaiappeq}
in fact equals 
\begin{align}
\label{eq:themajoreq}
& x^*+\delta_i-c_1\delta_i^2 + O(\delta_i^3),
\end{align}
for some constant  $c_1 > 0$.
In order to show this, we must prove three statements: First, that $f(0)=x^*$. Second, that $f'(0)=1$. Third, that $f''(0) = - c_1 < 0$.  
The first statement holds by definition of $x^*$. We now turn to proving the second statement.

\paragraph{Proof that $f'(0) =1$}
\noindent For convenience, in what follows, let $S(a,z) = \sum_{j=0}^{a} \frac{z^j}{j!}$, and note that $\frac{\text{d}S(a,z)}{\text{d}z} = S(a-1,z)$.  (For the case where $a=0$, we interpret $S(-1,z) = 0$.)  

To begin, recall that Equation \eqref{eq:cstar} expresses $c^*$ as
$\min_{x> 0} F(x)$, where \begin{equation*}
\label{eq:F} F(x) = \frac{x}{r\left(1-e^{-x}S(k-2,x)\right)^{r-1}},
\end{equation*}
and that $x^*$ is the value of $x$ that achieves the minimum.
Since $x^*$ is a local minimum of $F$, it must hold that
$F'(x^*)=0$. To ease calculations, let $G(x^*)=F(x^*)/r$:
since $F'(x^*)=0$, it holds that $G'(x^*)=0$ as well.  
Explicitly computing $G'(x^*)$, we see that:
\begin{eqnarray*}(1-e^{-x^*}S(k-2,x^*))^{1-r} - x^*(r-1)(1-e^{-x}S(k-2,x^*))^{-r}\\
\cdot
(e^{-x^*}S(k-2,x^*)-e^{-x^*}S(k-3,x^*)) = 0.\end{eqnarray*}

Standard manipulations then reveal:
\begin{eqnarray} 
\label{eq:starderiv}
e^{-x^*}(S(k-2,x^*)-S(k-3,x^*)) & = & \frac{1-e^{-x^*}S(k-2,x^*)}{x^*(r-1)}.
\end{eqnarray} 

Now recall that
$$f(\delta_i) = (1 - e^{-x^*-\delta_i}S(k-2,x^*+\delta_i))^{r-1}rc^*.$$

It follows that
\begin{eqnarray} 
\notag
f'(0) \\
\notag
& \!\!\!\!\!\!\!\!\!\!\!\!\!\!\!\!\!\!\!\!\!\!\!\!\!\!\!\!\!\!\!\!\!\!\!\!\!\!\!\!\!\!\!\!\!\!\!\!\!\!\!\! = &\!\!\!\!\!\!\!\!\!\!\!\!\!\!\!\!\!\!\!\!\!\!\!\!\!\!\! (r-1)rc^* (1 - e^{-x^*}S(k-2,x^*))^{r-2}e^{-x^*}(S(k-2,x^*)-S(k-3,x^*)) \\
\label{secondline}
      &\!\!\!\!\!\!\!\!\!\!\!\!\!\!\!\!\!\!\!\!\!\!\!\!\!\!\!\!\!\!\!\!\!\!\!\!\!\!\!\!\!\!\!\!\!\!\!\!\!\!\!\! = &\!\!\!\!\!\!\!\!\!\!\!\!\!\!\!\!\!\!\!\!\!\!\!\!\!\!\!\frac{rc^*}{x^*} (1 - e^{-x^*}S(k-2,x^*))^{r-1} \\
      \label{thirdline}
      &\!\!\!\!\!\!\!\!\!\!\!\!\!\!\!\!\!\!\!\!\!\!\!\!\!\!\!\!\!\!\!\!\!\!\!\!\!\!\!\!\!\!\!\!\!\!\!\!\!\!\!\!  = &\!\!\!\!\!\!\!\!\!\!\!\!\!\!\!\!\!\!\!\!\!\!\!\!\!\!\!\!\!\!\!\! 1.
\end{eqnarray} 
Here Equation \eqref{secondline} follows from Equation~\eqref{eq:starderiv}, and Equation \eqref{thirdline} follows from the definition
of $c^*$ and $x^*$ according to Equation \eqref{eq:cstar}.

\paragraph{Proof that $f''(0) < 0$}
\noindent After some tedious but straightforward calculations, we find  that
\begin{eqnarray} 
f''(0) &  = & \frac{r-2}{(r-1)x^*} - 1 + \frac{k-2}{x^*}.
\end{eqnarray} 
We therefore have that $f''(0) < 0$ as long as
\begin{eqnarray} \label{xstareq} 
x^* > k - 1 -\frac{1}{r-1}.
\end{eqnarray} 

Our argument will proceed as follows.
Equation \eqref{eq:cstar} implies that $x^*$ is a 
local minimum of the function $Z(x) = \frac{x}{(1-e^{-x}S(k-2,x))^{r-1}}$.
We will compute $Z'(x)$, and show that for $Z'(x) < 0$ for all $x \in (0, k-1)$ for any $r \geq 3$.  It will follow that $x^* \geq k-1$, and
hence Inequality \eqref{xstareq} holds. Details follow. 

It suffices to consider the function $rZ(x) = x(1-e^{-x}S(k-2,x))^{1-r}$,
as the derivative of $rZ(x)$ always has the sign as $Z(x)$. The derivative of $rZ(x)$ is    
\begin{align} \notag \left(1-e^{-x}S(k-2,x)\right)^{1-r} +\\ 
\notag x(1-r)\left(1-e^{-x}S(k-2,x)\right)^{-r}
 \cdot e^{-x}\left(S(k-2,x)-S(k-3,x)\right) = \\ 
  \left(1-e^{-x}S(k-2,x)\right)^{-r} \notag \\ \label{finalequationinpaper}
\cdot \left[\left(1-e^{-x}S(k-2,x)\right) + x^{k-1}e^{-x}(1-r)/\left(\left(k-2\right)!\right)\right].
\end{align}
We will show this the above expression is negative for all $x \in (0,k-1)$.  Note that 
$1-e^{-x}S(k-2,x) = e^{-x} \sum_{j=k-1}^\infty x^j/j! > 0$.  Hence, multiplying Expression \eqref{finalequationinpaper} through by
$(1-e^{-x}S(k-2,x))^{r}e^{x}$, we find the derivative is negative 
when 
$$\frac{(r-1)x^{k-1}}{(k-2)!} >  \sum_{j=k-1}^\infty x^j/j!.$$
Notice that the left hand side is $(r-1)(k-1) \geq 2(k-1)$ times the first term of the right hand side,
and for $x < k -1$, the terms in the summation on the right hand side are decreasing.  In fact,
after $k-1$ terms, the sum on the right hand side is dominated by a geometric series in which each term decreases by a factor of $1/2$.  It follows that right hand sum is less than $2(k-1)$ times the first term, and hence
the derivative is negative for all $x \in (0, k-1)$. This completes the proof that  $f''(0) < 0$, and we conclude that Equation \eqref{eq:themajoreq} 
holds.

\medskip
\medskip
\noindent  Equation \eqref{eq:themajoreq} combined with Taylor's Theorem implies that there exists some $h(\delta_i)$ such that $f(\delta_i) = x^*+\delta_i-c_1\delta_i^2 + h(\delta_i)\delta_i^2$, where $\lim_{\delta_i \to 0} h(\delta_i) = 0$. This means there exist constants $c_1', c_1''>0$ such that $x^* + \delta_i - c'_1\delta_i^2 <  f(\delta_i) < x^* + \delta_i - c''_1\delta_i^2$ for $|\delta_i|$ less than a suitably chosen small constant.

In the same way, we can find constants $c'_2, c''_2 > 0$ such that $c'_2 < g(\delta_i) < c''_2$ for $|\delta_i|$ less than a suitably small constant. Since $\beta_{i+1} = x^* + \delta_{i+1}$, we can examine the following recurrence for $\delta_{i+1}$:
\begin{align*}
\delta_{i+1} &= \delta_i - c_1\delta_i^2 - c_2\nu \\
\delta_{0} &= r(c^*-\nu) - x^*,
\end{align*}

where $c_1, c_2 > 0$.

Again, we can upper bound $\delta_0$ by a suitably small constant by taking $\nu$ small enough. Next, we show it takes $\Theta(\sqrt{1/\nu})$ rounds for $\delta_i < \tau-x^*$, proving the lemma.  (Note $\tau-x^* < 0$.)
We break the problem into three substeps: the number of rounds it takes to get from $\delta_0$ to $\Theta(\sqrt{\nu})$, from $\Theta(\sqrt{\nu})$ to $-\Theta(\sqrt{\nu})$, and from $-\Theta(\sqrt{\nu})$ to $\tau-x^*$.

\medskip
\noindent {\bf From $\Theta(\sqrt{\nu})$ to $-\Theta(\sqrt{\nu})$:}
Since $|\delta_i| = \Theta(\sqrt{\nu})$, from the recursion, $\Theta(\nu)$ is subtracted from $\delta_i$ in each round. Since this interval has length $\Theta(\sqrt{\nu})$, it takes $\frac{\Theta(\sqrt{\nu})}{\Theta(\nu)} = \Theta(\sqrt{1/\nu})$ rounds for this substep. 

\medskip
\noindent {\bf From $\delta_0$ to $\Theta(\sqrt{\nu})$:}
Since $\delta_i = \Omega(\sqrt{\nu})$, each round $\Omega(\nu)$ is subtracted from $\delta_i$. Intuitively, this means we may ignore the $-c_2\nu$ term, and the recursion becomes
\begin{align*}
\delta'_{i+1} = \delta'_i - c_1(\delta'_i)^2
\end{align*}
for a suitable constant $c_1 > 0$, with $\delta'_0 = \delta_0$. More formally, since $c_2, \nu > 0$, the sequence of $\delta'_i$ values require more rounds to reach $\Theta(\sqrt{\nu})$ than the sequence of $\delta_i$ values, so analyzing this recursion provides an upper bound on the number of rounds for $\lambda_i$ to fall from $\delta_0$ to $\Theta(\sqrt{\nu})$.

Let $\delta''_i = c_1\delta'_i$. Then the recursion can be rewritten as
\begin{align*}
\delta''_{i+1} = \delta''_i(1-\delta''_i).
\end{align*}
Let $\gamma_i = 1/\delta''_i$. Then $\gamma_{i+1} = \gamma_i + 1 + \frac{1}{\gamma_i-1}$, which implies $\gamma_i > \gamma_0 + i$. For any $\nu' > 0$, take $N$ such that $1/(N-1) < \nu'$. Then since $\gamma_i > i$, for all $i > N$, $\gamma_{i+1} < \gamma_i + 1 + \nu'$ and $\gamma_i < (1+\nu')i$ for sufficiently large $i$. Therefore, $\gamma_i = (1+o(1))i$ and $\delta''_i = \frac{1+o(1)}{i}$.
Thus, it takes $i = O(\sqrt{1/\nu})$ rounds for $\delta''_i$ (and hence $\delta_i$) to reach $\Theta(\sqrt{\nu})$.

\medskip
\noindent {\bf From $-\Theta(\sqrt{\nu})$ to $\tau-x^*$:}
By the same reasoning as the previous case, consider the recursion
\begin{align*}
\delta''_{i+1} = \delta''_i(1-\delta''_i).
\end{align*}
Consider the sequence backwards; the number of rounds from $-\Theta(\sqrt{\nu})$ to $\tau-x^*$ is equivalent to the number of rounds for the ``backwards'' recursion, starting from $\tau-x^*$ and going to $-\Theta(\sqrt{\nu})$.
The backwards recursion can be obtained by solving the quadratic equation for $\delta''_i$:
\begin{align*}
\delta''_i = \frac{1-\sqrt{1-4\delta''_{i+1}}}{2}.
\end{align*}
We can reverse the negative signs and look at the following recursion
\begin{align*}
\gamma_{i+1} &= \frac{\sqrt{1+4\gamma_i}-1}{2}; \\
\gamma_0 &= x^*-\tau.
\end{align*}
The Taylor series expansion for $\frac{\sqrt{1+4x}-1}{2}$ reveals that $\frac{\sqrt{1+4x}-1}{2}= x-x^2+O(x^3)$, and it can be shown that $\frac{\sqrt{1+4x}-1}{2} < x-\frac{1}{2}x^2$ for $0 < x < 2-\sqrt{2}$. It takes a constant number of steps to get from $\gamma_0 = x^*-\tau$ to $2-\sqrt{2}$, and then we can upper bound the number of steps needed by this recursion to reach $\Theta(\sqrt{\nu})$ by the recursion $\gamma'_{i+1} = \gamma'_i-\frac{1}{2}(\gamma'_i)^2$.  As with the previous case, it takes $O(\sqrt{1/\nu})$ rounds for $\gamma_i$ to reach $\Theta(\sqrt{\nu})$.
\end{proof}


We note that, again, the above analysis focuses on the idealized process, but we can handle deviations between the idealized process and the actual process essentially as in Theorem \ref{thm:loglog}.

Theorem~\ref{thm:distance} follows readily.  
Choose $\tau$ satisfying $$\tau < \big(\frac{rc^*}{[(k-1)!]^{r-1}}\big)^{-\frac{1}{(k-1)(r-1)-1}}$$ and $\tau < 1$. By Lemma \ref{bigasslemma}, it takes $\Theta(\sqrt{1/\nu})$ rounds before $\beta_i < \tau$. 
The argument in Section \ref{sec:high} shows that in the idealized branching process, $\beta_i$ drops off doubly exponentially in the number of rounds after that, giving the $\frac{1}{\log((k-1)(r-1))}\log \log n$ additive term. 
Finally, the argument in the proof of Theorem \ref{thm:loglog} 
shows that deviations from the idealized process result in $O(1)$
additional rounds with high probability.

Our three-phase analysis appears to accurately
capture the empirical evolution of the idealized recursion. For example, Figure~\ref{fig:appsimulation} shows the behavior of $\beta_i$ according to the idealized recurrence of Equation \eqref{eq:figureequation} for selected values of $c$ close to the threshold
when $k=2$ and $r=4$. In this case the threshold $c^*_{2, 4}$ is approximately $0.77228$, and we show the evolution of $\beta_i$ at $c=0.77$ and $c=.772$. The long ``stretch'' in the middle of the plots corresponds to the $\Theta(\sqrt{1/\nu})$ rounds required  during ``middle phase''  in our argument, in which $\beta_i$ falls from 
$(\Theta(\sqrt{1/\nu})$ to $-\Theta(\sqrt{1/\nu})$.  

\begin{figure*}
\begin{center} 
\includegraphics[width=2.5in]{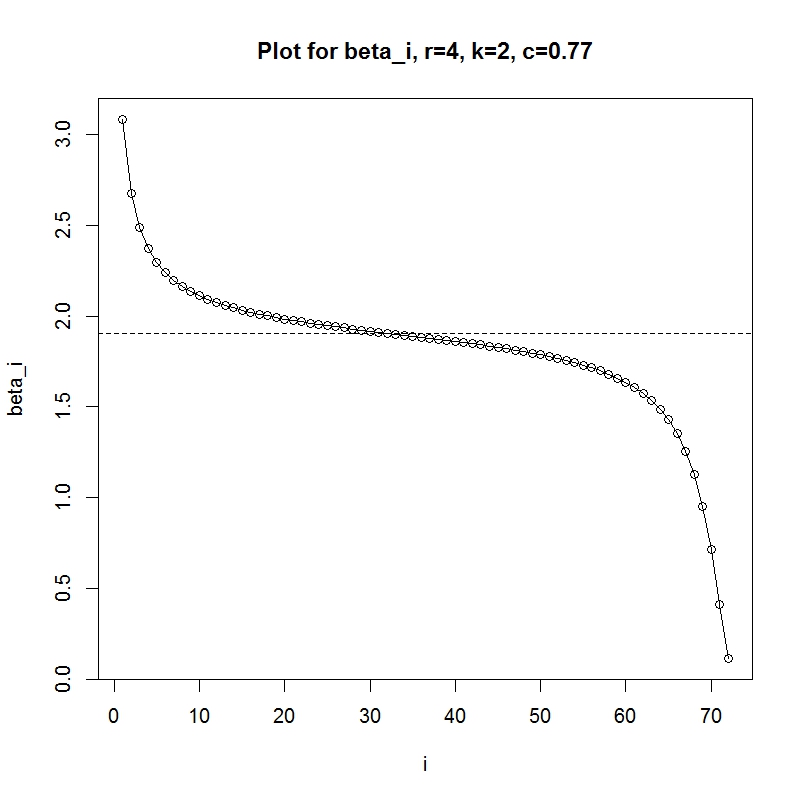} 
\mbox{                          }
\includegraphics[width=2.5in]{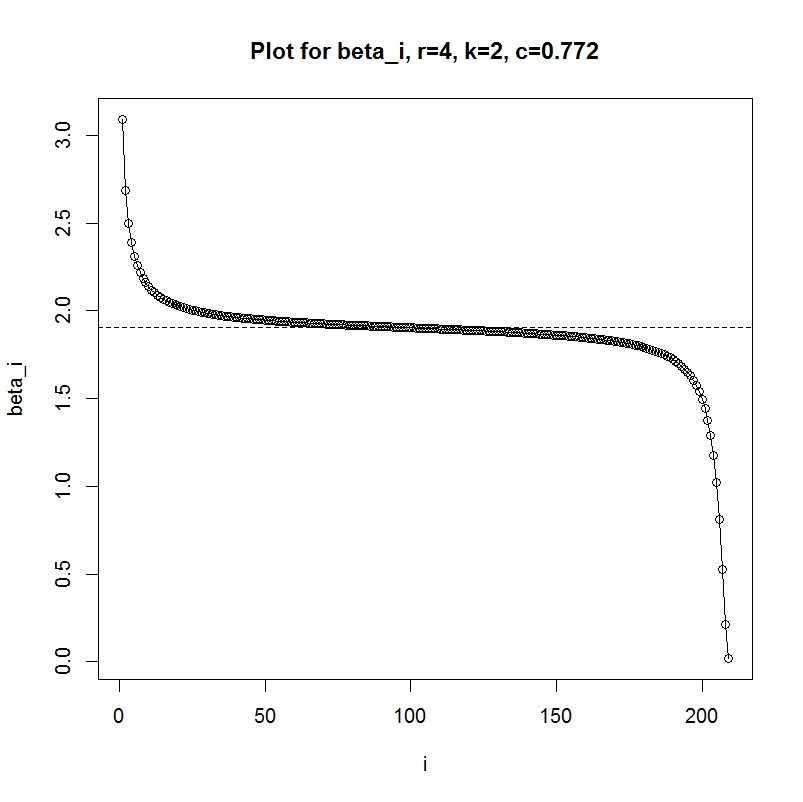}
\end{center} 
\caption{Behavior of the $\beta_i$ according to the idealized
recurrence of Equation \eqref{eq:figureequation} at values of $c$
close to the threshold density $c^*_{2, 4} \approx .77228$.}
\label{fig:appsimulation}
\end{figure*}


\begin{thebibliography}{99}

\bibitem{achlio}
D. Achlioptas and M. Molloy. 
\newblock The solution space geometry of random linear equations.  
\textit{Random Structures and Algorithms} (to appear), 2013.

\bibitem{ABKU}
Y. Azar, A. Broder, A. Karlin, and E. Upfal.
\newblock Balanced allocations.
\newblock \textit{SIAM Journal of Computing} 29(1):180--200, 1999.

\bibitem{BFU} A. Broder, A. Frieze, and E. Upfal. On the satisfiability and maximum
satisfiability of random 3-CNF formulas. In \textit{Proc. of the Fourth
Annual ACM-SIAM Symposium on Discrete Algorithms}, pp. 322--330,
1993.

\bibitem{Bloomier}
B. Chazelle, J. Kilian, R. Rubinfeld, and A. Tal.
The {B}loomier filter: an efficient data structure for static support lookup tables.
In \textit{Proc. of the Fifteenth Annual ACM-SIAM Symposium on Discrete Algorithms},
pp. 30--39, 2004.

\bibitem{Fan}
F. Chung and L. Lu.
Concentration inequalities and martingale inequalities: a survey.
\textit{Internet Mathematics}, 3(1):79-127, 2006.

\bibitem{DGMMPR} M. Dietzfelbinger, A. Goerdt, M. Mitzenmacher, A. Montanari,
R. Pagh, and M. Rink. Tight thresholds for cuckoo hashing via
XORSAT. In \textit{Proc. of ICALP}, pp. 213--225, 2010.

\bibitem{EGUV} D. Eppstein, M. Goodrich, F Uyeda, and G. Varghese. 
What's the Difference?  Efficient Set Reconciliation without Prior Context.  
\textit{ACM SIGCOMM Computer Communications Review (SIGCOMM 2011)}, 41(4):218--229, 2011.


\bibitem{gao}
P. Gao.
Analysis of the parallel peeling algorithm: a short proof. 
\emph{arXiv:1402.7326}, 2014.

\bibitem{GM} M. Goodrich and M. Mitzenmacher. Invertible Bloom Lookup Tables.
In \textit{Proc. of the 49th Allerton Conference}, pp. 792--799, 2011.

\bibitem{fullversion}
J. Jiang, M. Mitzenmacher, J. Thaler.
Parallel Peeling Algorithms. \emph{CoRR abs/1302.7014}, 2013.

\bibitem{KLM}
R. Karp, M. Luby, and F. Meyer auf der Heide.
Efficient {PRAM} simulation on a distributed memory machine.
{\em Algorithmica}, 16(4):517--542, 1996.

\bibitem{KMW}
A. Kirsch, M. Mitzenmacher, and U. Wieder.
More robust hashing: Cuckoo hashing with a stash.
{\em SIAM Journal on Computing}, 39(4):1543-1561, 2009.

\bibitem{lecam} L. Le Cam.
An approximation theorem for the Poisson binomial distribution.
Pacific Journal of Mathematics 10(4):1181-1197, 1960.  

\bibitem{LMSS}
M. Luby, M. Mitzenmacher, A. Shokrollahi, and D. Spielman.
Efficient erasure correcting codes.
{\em IEEE Transactions on Information Theory}, 47(2):569--584, 2001. 

\bibitem{Mitzenmacher} M. Mitzenmacher. The power of two choices
in randomized load balancing.  {\em IEEE Transactions on Parallel and                                                 
Distributed Systems}, 12(10):1094--1104, 2001.


\bibitem{MU}
M. Mitzenmacher and E. Upfal.
{\bf Probability and computing: Randomized algorithms and probabilistic analysis},
2005, Cambridge University Press.

\bibitem{MV} M. Mitzenmacher and G. Varghese. Biff (Bloom filter) codes: Fast
error correction for large data sets. In \textit{Proc. of the IEEE
International Symposium on Information Theory}, pp. 483--487, 2012.

\bibitem{MVo} M. Mitzenmacher and B. V{\"o}cking.
The asymptotics of selecting the shortest of two, improved.
\textit{Proc. of the 37th Annual Allerton Conference on Communication Control and Computing},
pp. 326--327, 1999.

\bibitem{Molloy} M. Molloy. The pure literal rule threshold and cores in random
hypergraphs. In \textit{Proc. of the 15th Annual ACM-SIAM Symposium on
Discrete Algorithms}, pp. 672--681, 2004.

\bibitem{PR} A. Pagh and F. Rodler. Cuckoo hashing. \text{Journal of Algorithms},
51(2):122--144, 2004.

\bibitem{vocking}
B. V{\"o}cking.
How asymmetry helps load balancing,
{\em Journal of the ACM}, 50(4):568--589, 2003.

\bibitem{voll} U. Voll.
Threshold Phenomena in Branching Trees and Sparse Random Graphs.  
Dissertation.  Techischen Universit\"{a}t M\"{u}nchen.  2001.

\end{thebibliography}
\end{document}